\newtheorem{Theorem}{Theorem}[section]
\newcommand{\beq}{\begin{equation}}
\newcommand{\eeq}{\end{equation}}
\newcommand{\bq}[1]{\begin{equation} \label{#1}}
\newcommand{\eq}{\end{equation}}
\newcommand{\bed}{\begin{displaymath}}
\newcommand{\eed}{\end{displaymath}}
\newcommand{\bea}{\bed\begin{array}{rl}}
\newcommand{\eea}{\end{array}\eed}
\newcommand{\barray}{\begin{array}{ll}}
\newcommand{\earray}{\end{array}}
\newcommand{\argmin}{\operatornamewithlimits{argmin}}
\renewcommand{\hat}{\widehat}
\renewcommand{\bar}{\overline}
\newtheorem{Lemma}{Lemma}[section]
\title{Removal of Data Incest in Multi-agent Social Learning in Social Networks }
\author{Maziyar Hamdi$^*$, \textit{Student, IEEE}, Vikram Krishnamurthy, \textit{Fellow, IEEE} \thanks{Maziyar Hamdi
and  Vikram Krishnamurthy are with the Department of Electrical and Computer
Engineering, University of British Columbia, Vancouver, Canada. Email:
{\small \{maziyarh, vikramk\}@ece.ubc.ca.}}
}
\begin{document}

\maketitle
\thispagestyle{empty}
\pagestyle{empty}

\begin{abstract}

\hspace{1mm}Motivated by online reputation systems, we investigate  social learning in a network where agents interact on a time dependent graph to estimate an underlying state of nature. Agents record their own private observations, then update their private beliefs about the state of nature using Bayes' rule. Based on their belief, each agent then chooses an action (rating)  from a finite set and transmits this action over the social network.  An important consequence of such social learning over a network is the ruinous multiple re-use of information known as data incest (or mis-information propagation). In this paper, the data incest management problem in social learning context is formulated on a directed acyclic graph. We  give necessary and sufficient conditions on the graph topology of social interactions to eliminate  data incest. A data incest removal algorithm is proposed such that the public belief of social learning (and hence the actions of agents) is not affected by data incest propagation. This results in an  online reputation system with a higher trust rating.  Numerical examples are provided to illustrate the performance of the proposed optimal data incest removal algorithm.

\end{abstract}

\begin{keywords}
Bayesian models, social networks, data incest, directed acyclic graph, herding, mis-information propagation, social learning.
\end{keywords}
\section{Introduction}\label{sec:intro}

In social learning, agents aim to estimate the state of nature using their private observations and actions from other agents \cite{SL}. The process of updating belief by agents can be done using Bayesian models \cite{Bayesian,Gale} or non-Bayesian models \cite{word,thumb}. In this paper, we consider Bayesian social learning models along with ``\textit{data incest}" also known as mis-information propagation \cite{jstp}. This results in a non-standard information pattern for Bayesian estimation. Before proceeding to the formal definition of data incest in learning over social networks, let us describe the social learning model.

\subsection{Bayesian Social Learning Protocol on Network}\label{subsec:pf}
Consider a social network comprising of $S$ agents that aim to estimate (localize) an underlying state of nature (a random variable). Let $x \in \{\bar x_1,\bar x_2,\cdots,\bar x_X\}$ represent a state of nature (such as quality of a hotel) with known prior distribution $\pi_0$ where $X$ denotes the dimension of the state space. Let $k = 1,2,3,\ldots$ depict epochs at which events occur. These events comprise of taking observations, evaluating beliefs and choosing actions as  described below. The index $k$ depicts the historical order of events and not necessarily absolute time. However, for simplicity, we refer to $k$ as ``time" in the rest of the paper.

The agents use the following Bayesian social learning protocol to estimate the state of nature:
\paragraph*{Step 1. Private observations} To estimate the state of nature $x$, each agent records its $M$-dimensional private observation vector. At each time $k = 1, 2, 3, \ldots$, each agent $s$ ($1 \leq s \leq S$) obtains a noisy private observation $z_{[s,k]}$ from the finite set\footnote{The results of this paper also apply to continuous-valued observations. We consider discrete-valued observations since humans typically record discrete observations.}  $\mathbf{Z} = \{\bar z_1,\bar z_2,\ldots,\bar z_Z\}$ with conditional probability
\begin{equation}\label{eq:B}
B_{ij} = p(z_{[ s,k]} = \bar z_j|x = \bar x_i).
\end{equation} It is assumed that the observations  $z_{[ s,k]}$ are independent random variables with respect to agent $s$ and time $k$\footnote{It is not necessary for agents to record observations at each time $k$ and this does not interfere with the common knowledge assumption in social learning where agents all know about the structure of social learning model. Agents at different time instants  are treated as different nodes in our graphical model. The assumption that  agents record observation at each time $k$  simplifies notation. }.
\paragraph*{Step 2. Private belief} After obtaining its private observation, each agent combines its private observation with the information received form the network to evaluate its belief about state of nature. Define
\begin{eqnarray}\label{eq:sigma}
\Theta_{[ s,k ]} &=& \text{ set of all information received from network  available at agent $s$ at time $k$},\nonumber\\
\end{eqnarray}
 Each agent $s$ combines its  private observation $z_{[ s,k]}$ with the most updated public belief (posterior distribution of state of nature given actions of previous agents)  and evaluates its private belief about state of nature\footnote{The scenario where agents choose their actions according to the most recent public belief is similar to the classical social learning formulation \cite{herd} where actions are transmitted over the network. In this scenario, the information received from the network, $\Theta_{[s,k]}$, is the most updated public belief which is computed  in Step 5. }. Agents use Bayesian social learning to update their private beliefs. The private belief, $\mu_{[ s,k ]}$, is defined as the posterior distribution of the state of nature given the private observation and all information received from other agents in the social network, that is
\begin{equation}\label{eq:privateb1}
\mu_{[ s,k ]} = (\mu_{[ s,k ]}(i), 1\leq i\leq X),\text{where}\quad  \mu_{[ s,k ]}(i) = p\left(x =\bar x_i|\Theta_{[s,k]}, z_{[s,k]}\right).
\end{equation}
Note that the agent's private belief (private opinion) is  not available to other agents or network administrator (who runs the online reputation system).

\paragraph*{Step 3. Local action} Based on its private belief $\mu_{[ s,k]}$, agent $s$ at time $k$ chooses an action $a_{[s,k]}$ from a finite set $ \mathbf{A} = \{1,2,\ldots,A\}$ to minimize its expected cost function (based on the current information available on the network). That is
\begin{equation}\label{eq:action}
a_{[s,k]} = \argmin_{a\in \mathbb{A}}\mathbf{E}\{C(x,a)|\Theta_{[s,k]}, z_{[s,k]}\}.
\end{equation}
Here $\mathbf{E}$ denotes expectation and  $C(x,a)$ denotes the cost incurred by the agent if action $a$ is chosen when the state of nature is $x$.
\paragraph*{Step 4. Social network} Individual agents  broadcast their actions $a_{[s,k]}$ over the social network\footnote{We assume that multiple agents can transmit simultaneously over the network without interfering with each other. This is realistic in a social network, since the time required to exchange  (broadcast) information is substantially smaller than the time to record observations, update beliefs or take actions.}. These actions are  observed by other agents after a random delay. We model this information exchange using a family of directed acyclic graphs. Let\begin{equation}\label{eq:defG} G_{[ s,k]} = (V_{[ s,k]}, E_{[ s,k]}), \quad k = 1,2,3,\ldots, s = 1,2,\ldots,S, \end{equation} denote a sequence of time-dependent graphs of information flow in the social network until and including time $k$. Each vertex in $V_{[ s,k]}$ represents an agent $s$ in the social network at time $k$ and each edge $((s',k'),(s'',k''))$ in $E_{[ s,k]}\subseteq V_{[ s,k]} \times V_{[ s,k]}$ shows that the information (action) of agent $s'$ at time $k'$ reaches agent $s''$ at time $k''$.

\paragraph*{Step 5. Network's public belief update} The network administrator only has access to the local actions of agents. The public belief of the network administrator is the posterior distribution of state of nature given the actions of all agents at previous times. The goal of the paper is to design the network administrator's public belief update. 

As we will see shortly, a major issue with the above protocol is the inadvertent reuse of information (actions of previous agents) which makes the estimates of state of nature biased; that is data incest.

{\bf Aim}: The objective of the paper is to design Step 5 of the above protocol so that when agents use Step 2, data incest is mitigated. The aim is to combine the information received from  agents to compute $p(x|\Theta_{[s,k]})$\footnote{An alternative method is to modify Step 4 such that the network administrator maps the agents' action in Step 3 to a new action and transmits it over the network. In other words, each agent submits a private recommendation (action) and the network administrant suitably modifies this action to avoid incest (using the algorithms we present in Sec.\ref{sec:optimal}).}.

The above protocol models the interaction of agents in a social network that aim to estimate the underlying state of nature $x$. An example is where users aim  to localize a target event by tweeting the location of the detected ``target" on Twitter \cite{earthquake}. Another example is where the state of nature is the true quality of a social unit (such as restaurant). Online reputation systems such as Yelp or Tripadvisor  maintain logs of votes by agents (customers). Each agent visits a restaurant based on reviews on a reputation website such as Yelp. The agent then obtains  private noisy measurement of the state (quality of food in a restaurant). The agent then reviews the restaurant on that reputation website. Such a review typically is a quantized version (for example, rating) of the total information (private belief) gathered by the agent\footnote{The dimension of private beliefs is typically larger than that of actions. Also, individuals tend not to provide their private beliefs at the time of their further social interactions. Therefore, agents map their beliefs to a finite set of actions which are easier to broadcast.}.  With such a protocol, how can agents obtain a fair (unbiased) estimate of the underlying state?\footnote{Having fair estimates of quality of a social unit is a problem of much interest in business. Most of hotel managers (81\%) regularly check the reviews on Tripadvisor \cite{monopoly}.  In \cite{HBR}, it is found that a one-star increase in the average rating of users in Yelp is mapped to about 5-9 \% revenue increase.}. The aim of this paper is for the network administrator to maintain an unbiased reputation system, or alternatively modify the actions of agents, to avoid incest.
\subsection{Context: Data Incest in Social Learning}\label{subsec:context}
\begin{figure}[t]
\centering
\hspace{0cm}\scalebox{.3}{\includegraphics{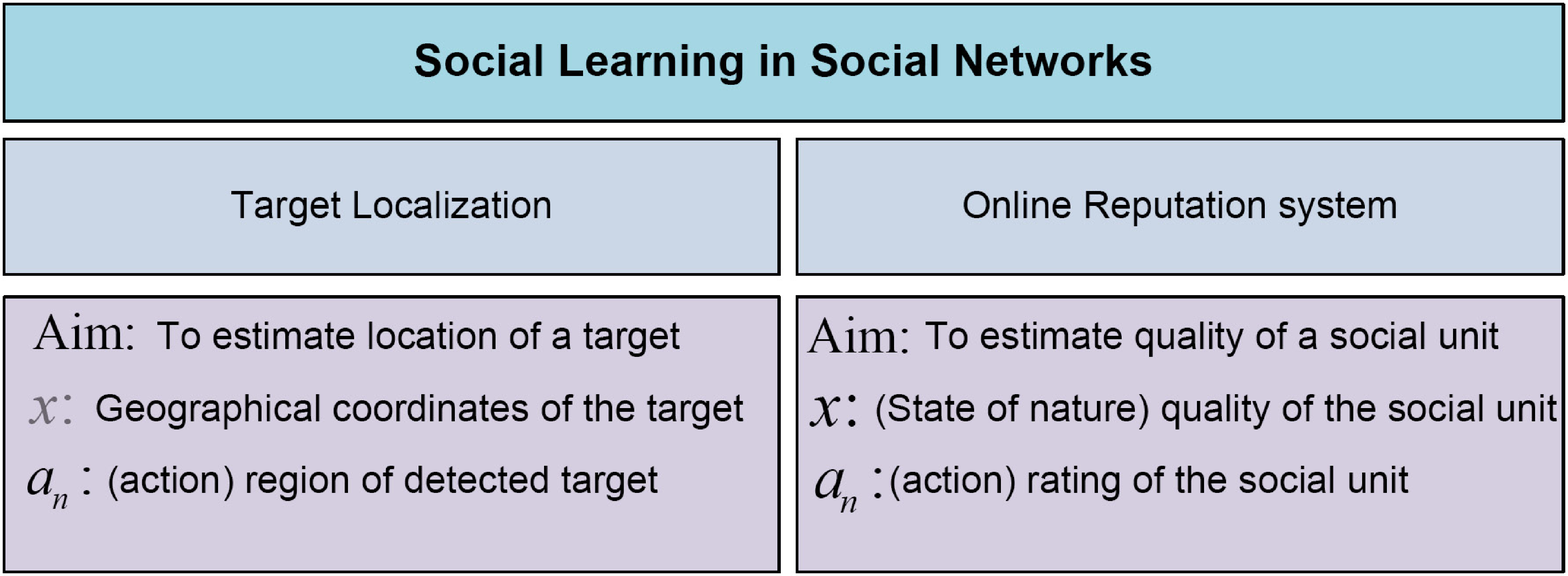}}
\caption{Two examples of multi-agent social learning in social networks: (i) target localization, and (ii) online reputation systems. }
\label{BD4}
\end{figure}
From a statistical signal processing point of view, estimating the state of nature $x$ using the above five-step protocol is non-standard in two ways: First, agents are influenced by the rating of other agents, this is prior influences their posterior and hence their rating.
This effect of agents learning from the actions (ratings) of other agents along with its own private observation is termed ``social learning'' in the economics literature. Social learning can result in interesting phenomenon where rational agents can all end up making the same decision (herding and information cascades; \cite{herd}). Second, (and this effect is more complex), an agent might be influenced by his own rating leading to data incest.

To explain what can go wrong with the above protocol, suppose an agent wrote  a poor rating of the restaurant on a social media site at time 1. Another agent is influenced by this rating and also gives the restaurant a poor rating at time 2. Assume that the diffusion of action is modeled by the graph depicted in Fig.\ref{sample}.  The first agent visits the social media site at time 3 and sees that another agent has also given the restaurant a poor rating - this double confirms his rating and he enters another poor rating.  In a fair system, the first agent should have been aware that the rating of the second agent was influenced by his rating - so that first agent has effectively double counted his first rating by casting the second poor rating. Data incest is a consequence of the recursive nature of Bayesian social learning and the communication graph. The data incest in a social network is defined as the naive re-use of actions of other agents in the formation of the belief of an agent when these actions could have been initiated by the agent.

\begin{figure}[h]
\centering
\hspace{0cm}\scalebox{.16}{\includegraphics{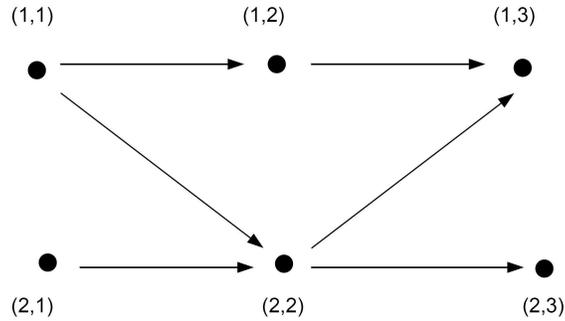}}
\caption{ \label{sample} Example of communication graph, with two agents ($S = 2$) and over three event epoches ($K = 3$). The arrows represent exchange of information regarding actions taken by agents.}
\end{figure}

The two effects of social learning and data incest lead to non-standard information patterns in  state estimation. Herd occurs when the public belief overrides the private observations and thus actions of agents are independent of their private observations. An extreme case of this is an information cascade when  the public belief of social learning hits a fixed point and does not evolve any longer. Each agent in a cascade acts according to the fixed public belief and social learning stops \cite{herd}\footnote{There are subtle differences between an individual agent herding, a herd of agents and an information cascade; see for example \cite{herd,KP13}.}.
Data incest results in bias in the public belief as a consequence of the unintentional re-use of identical actions in the formation of public belief of social learning; the information gathered by each agent is mistakenly considered to be independent. This results in over confidence and bias in estimates of state of nature. Due to the lack of information about the  topology of the communication graph, data incest arises in Bayesian social learning in social networks. Therefore, the Bayesian social learning protocol requires a careful design to ensure that data incest is mitigated. The aim of this paper is to modify the five-step protocol presented in Sec.\ref{subsec:pf} such that data incest does not arise. As we will see in Sec.\ref{subsec:dis}, the proposed data incest removal algorithm can be applied to the state estimation problems shown in Fig.\ref{BD4}.

\subsection{Main Results and Paper Organization:}
With the above five-step social learning protocol in social networks, we are now ready to outline the main results of this paper:
\begin{enumerate}
\item  In Sec.\ref{sec:social}, the data incest problem is formulated on a family of time dependent directed acyclic graphs
\item In Sec.\ref{sec:optimal}, a necessary and sufficient condition on the graph is  provided for exact data incest removal. This constraint is on the topology of  communication delays (communication graph). Also examples where exact incest removal is not possible are illustrated.
\item A data incest removal algorithm is proposed for the five-step social learning protocol in Sec.\ref{sec:optimal}. The data incest removal algorithm is employed by the network administrator to update the public belief in Step 5 of the social learning protocol of Sec.\ref{subsec:pf}\footnote{In this paper we consider Bayesian estimation over a finite time horizon.  We do not consider  the asymptotic agreement of social learning or consensus formation in social networks. Consensus formation is asymptotic and typically non-Bayesian. From a practical point of view, information exchange in a social network is typically over a finite horizon.}.

\end{enumerate}

Finally in Sec.\ref{sec:num}, numerical examples are provided which illustrate the data incest removal algorithm.
\subsection{Related Works:}
Social learning theory is used to investigate the learning behavior of agents in social and economic networks \cite{SL}. There are several papers in the literature discussing Bayesian models \cite{BN,Bayesian,AceSL,ghana,KP13} and non-Bayesian models \cite{word,thumb} for social learning. Different models for diffusion of beliefs in social networks are presented in \cite{diffusion}. For a comprehensive survey on herding and information cascade in social learning, see \cite{herd}. Stochastic control with social learning for sequential change detection problems is considered in \cite{K12}.

 Mis-information in the signal processing literature refers to faulty or inaccurate information which is broadcasted unintentionally.  A different type of mis-information called ``gossip'' is investigated in \cite{gossipp} where non-Bayesian models are employed. A model of Bayesian social learning where agents receive private information about state of nature and observe the actions of their neighbors is investigated in \cite{tamuz}. They proposed an algorithm for agents' calculations on tree-based social networks and analyzed their algorithm in terms of efficiency and convergence. Another category of mis-information caused by influential agents (agents who heavily affect  actions of other agents in social networks) is investigated in \cite{SL}. Mis-information in the context of this paper is motivated by sensor networks where the term ``data incest" is used \cite{vikram2}. In multi-agent social learning in networks, data incest occurs when information (action) of one agent is double-counted by other agents (due to the lack of information about the topology of the communication graph); this yields to overconfidence.

 The overconfidence phenomena (caused  by data incest) also arises in  Belief Propagation (BP) algorithms \cite{Pearl, Murphy} which are used in computer vision and error-correcting coding theory. The aim of BP algorithms is to solve inference problems over graphical models such as Bayesian networks (where nodes represent random variables and edges depict dependencies among them) by computing a marginal distribution. BP algorithms require passing local messages over the graph (Bayesian network) at each iteration. These algorithms converge to the exact marginal distribution when the factor graph is a tree (loop free). But for graphical models with loops, BP algorithms are only approximate due to the over-counting of local messages \cite{Weiss} (which is similar to data incest in multi-agent social learning)\footnote{ There exists some similarities between BP and social learning in the sense that they are both systematic structures to perform Bayesian inference over graphs. However, they are not related in principle. While graphs represent social interactions among agents in social learning, graphical models in BP depict the conditional dependency between nodes (random variables)-- they do not imply the actual communications, see\cite{diffusion}.}.  With the algorithm presented in Sec.\ref{sec:optimal}, data incest can be mitigated from Bayesian social learning over non-tree graphs that satisfy a topological constraint.

The closest work to the current paper is \cite{jstp}. In \cite{jstp,fusion}, data incest is considered in a network where agents exchange their private belief states - that is, no social learning is considered. In a social network, agents rarely exchange private beliefs, they typically broadcast actions (votes) over the network. Motivated by trustable online reputation systems, we consider data incest in a social learning context with social network structure where actions (or equivalently public belief of the social learning) are transmitted over the network. This is quite different from private belief propagation in social networks. Simpler versions of this information exchange process and estimation were investigated by Aumann \cite{aumann} and Geanakoplosand and Polemarchakis \cite{GP}. The results derived in this paper extend theirs.

Finally, the methodology of this paper can be interpreted in terms of the recent Time magazine article \cite{timem} which provides interesting rules for online reputation systems. These include: (i) review the reviewers, (ii) censor fake (malicious) reviewers. The data incest removal algorithm proposed in this paper can be viewed as  ``reviewing the reviews" of other agents to see if they are associated with data incest or not.

\subsection{Limitations} We do not consider the case where the network is not known to the administrator. The state of nature in this paper is a random variable and we do not allow for estimating a random process. In this paper, we consider Bayesian estimation over a finite time horizon and the asymptotic agreement of social learning is not considered in this paper.
\section{Social Learning Over Social Networks}\label{sec:social}
 In this section, a graphical model is presented for the five-step social learning protocol introduced in Sec.\ref{subsec:pf}. In the evaluation of the private belief by agents, data incest may arise as a result of abusive re-use of information of other agents (caused by the lack of information about the topology of the network and the recursive nature of Bayesian models).

\subsection{Social Network Communication Model}\label{subsec:inf}
 With the five-step social learning protocol presented in Sec.\ref{subsec:pf} and the graph theoretic definitions provided in Appendix~\ref{subsec:graphtheory}, here we discuss the diffusion of actions in the resulting social network. For notational simplicity, instead of $[s,k]$, the following scalar index $n$  is used:
\begin{equation} \label{reindexing_scheme}
 n \triangleq s+ S(k-1), \quad
s \in \{1,\ldots, S\}, \; k \in \{1,2,3,\ldots \}\;.
\end{equation}
Recall that in the social learning model considered in the paper, the historical order of events is important and  $k$ is used to denote the order of occurrence of events in real time.  Subsequently, we will refer to $n$ as a ``node" of the time dependent communication graph $G_n$. Recall from Sec.\ref{sec:intro}, $G_n = (V_n, E_n)$ denotes the time-dependent communication graph of the social network. Each node $n'$ in $G_n$ represent an agent $s'$ at time $k'$ such that $n' = s' + S(k'-1)$, see (\ref{reindexing_scheme}). Each directed edge of $G_n$ shows a communication link in the social network represented by $G_n$. This means that if $(n,n') \in E_n$, agent $s'$ at time $k'$ uses the information of agent $s$ at time $k$ to update his private belief about the underlying state of nature $x$. Note that with the way we defined the communication graph, $G_n$ is always a sub-graph of $G_{n+1}$. Therefore, as the following theorem proves, diffusion of actions can be modeled via a family of time-dependent Directed Acyclic Graphs (DAGs)\footnote{See (\ref{eq:transitivieclosurematrix}) in Appendix~\ref{subsec:graphtheory}.}.
\begin{Theorem} \label{Theorem:informationflowDAG}
\it The information flow in a social learning over social networks comprising of $S$ agents for $k = 1,2,3,\ldots, K$ can be represented by a family of DAGs ${\mathcal G} = \{G_n\}_{n\in \{1,\ldots,N\}}$ where $N=SK$. Each DAG $G_n=(V_n,E_n)$ represents the information flow between the $n$ first nodes, where the generic node $n$ is defined by (\ref{reindexing_scheme}).
\end{Theorem}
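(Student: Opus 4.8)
The plan is to exhibit an explicit topological ordering of the vertices of each $G_n$ and then invoke the elementary fact that a finite directed graph admitting such an ordering contains no directed cycle. The natural candidate is precisely the scalar reindexing $n \triangleq s + S(k-1)$ of~(\ref{reindexing_scheme}): I will show that every directed edge of the communication graph points from a lower-indexed node to a higher-indexed one, so that listing the vertices of $G_n$ in increasing order of their index is a topological order.

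First I would pin down the edge set. By Step~4 of the protocol and the definition~(\ref{eq:defG}), an edge $((s,k),(s',k'))$ is present exactly when the action $a_{[s,k]}$ broadcast by agent $s$ at time $k$ is received and used by agent $s'$ at time $k'$. Since actions are observed only after a strictly positive (random) delay and $k$ indexes the historical order of events, receipt cannot precede broadcast; hence every such edge satisfies $k < k'$. Translating through~(\ref{reindexing_scheme}), $k < k'$ forces $n = s + S(k-1) < s' + S(k'-1) = n'$, because $s - s' \le S-1 < S \le S(k'-k)$. (If one additionally permits same-epoch exchange between distinct agents, ordered so that $s < s'$, the same conclusion $n < n'$ holds trivially.) Thus $G_n$, whose vertex set is $V_n = \{1,\dots,n\}$ by construction, has all edges directed ``forward'' in the index $n$.

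It then remains to assemble the conclusions. A finite digraph whose edges strictly increase a vertex labeling has no directed walk returning to its start, hence is acyclic; so each $G_n$ is a DAG. Since $E_n \subseteq V_n \times V_n$ is obtained from $E_{n+1}$ by deleting node $n+1$ together with its incident edges, $G_n$ is an induced subgraph of $G_{n+1}$, yielding the nested family $G_1 \subseteq G_2 \subseteq \cdots$. Finally, as $s$ ranges over $\{1,\dots,S\}$ and $k$ over $\{1,\dots,K\}$, the map~(\ref{reindexing_scheme}) is a bijection onto $\{1,\dots,SK\}$, so $N = SK$ and $\mathcal G = \{G_n\}_{n=1}^{N}$ is the asserted family of DAGs. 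The only genuinely delicate point — and the one I would state most carefully — is that every edge strictly increases the time index, i.e., that the model precludes instantaneous or backward information flow; this is exactly where the ``observed after a random delay'' assumption of Step~4 and the reading of $k$ as historical (not absolute) time are used. Everything else is the routine translation of the partial order on the pairs $(s,k)$ into the scalar index $n$ and the standard equivalence between possessing a topological order and being acyclic.
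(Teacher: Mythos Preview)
Your proposal is correct and follows essentially the same approach as the paper: both argue that every edge of $G_n$ points from a smaller scalar index to a larger one because information can only flow forward in the epoch index $k$, and then conclude acyclicity. The paper phrases this as ``the adjacency matrix is strictly upper triangular'' and appeals to Lemma~\ref{lem:am2tcm}, whereas you phrase it as the existence of a topological ordering; these are equivalent, and your treatment of the inequality $n<n'$ and of the nested-family structure is in fact somewhat more explicit than the paper's.
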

\begin{proof}
The proof is presented in Appendix \ref{subsec:proofp1}.
\end{proof}
The adjacency and the transitive closure matrices of $G_n$ are denoted by $\bar A_n$ and $T_n$, respectively (see Appendix~\ref{subsec:graphtheory}).
Using the adjacency and transitive closure matrices of $G_n$, all nodes whose information are available at node $n$ can be found. From that, $\Theta_n$ is specified (see Eq. (\ref{eq:sigma})).
\subsection{Constrained Social Learning in  Social Networks}
The observation process and the evaluation of private belief are described in Sec.\ref{subsec:pf}. With the scalar index defined in (\ref{reindexing_scheme}), the observation vector and the private belief  of  node $n$ (that represents agent $s$ at time $k$) are denoted by $z_n$ and $\mu_n$, respectively.  Recall from Sec.\ref{sec:intro} that the public belief of the network is the posterior distribution of state of nature given all information available at node $n$, that is\footnote{Recall from Sec.\ref{subsec:pf} that $\Theta_n$ denotes the set of all information available at node $n$.}
\begin{equation}\label{eq:public}
\pi_{-n} = (\pi_{-n}(i), 1\leq i \leq X),\text{where} \quad \pi_{-n}(i) = p(x = \bar x_i|\Theta_{n}).
\end{equation}
After action $a_n$ is chosen (by agent $s$ at time $k$), the public belief of social learning changes (because of the most recent action $a_n$). To avoid confusion, we define ``\textit{after-action}" public belief which includes action $a_n$, that is
\begin{equation}\label{eq:public}
\pi_{+n} = (\pi_{+n}(i), 1\leq i \leq X),\text{where} \quad \pi_{+n}(i) = p(x = \bar x_i|\Theta_n, a_n).
\end{equation}

Note that the scenario where agents in social network have access to the most up-to-date public belief and updates their private beliefs accordingly (using the public belief of social learning and their private observations) is similar to the classical social learning setup where agents transmit their actions over the network. As discussed in Sec.\ref{sec:intro}, here instead of classical social learning setup, we consider a scenario that the network administrator evaluates the after-action public beliefs of agents and transmits them over the network.
When action $a_n$ is chosen by an agent and submitted to the network, the network administrator computes the corresponding after-action public belief immediately (without delay) and broadcasts it over the network. As discussed in Sec.\ref{sec:intro}, due to the communication delay, the transmitted belief reaches other agents after a random delay (which is modeled via communication graph $G_n$ with adjacency matrix $\bar A_n$ and transitive closure matrix $T_n$). In this scenario  $\Theta_n$ is defined as
\begin{equation}\label{eq:theta}
\Theta_{n} = \{\pi_{+i}; \quad  (1\leq i\leq n-1) \text{ and } \bar A_n(i,n) = 1 \}
\end{equation}    The following lemma shows how to update private belief, choose local action , and finally update the after-action public belief using $\Theta_n$ and $z_n$.
\begin{Lemma}\label{lem:1}
Consider the five-step  social learning protocol presented in Sec.\ref{subsec:pf} with $S$ agents and the communication graph $G_n$. Let $\pi_{-n}$ denote the public belief of social network at this node. Then, the social learning elements (private belief, action, and after-action public belief) of node $n$ with observation vector $z_n = \bar z_l$ can be computed from ($1 \leq m \leq X$)
\begin{align}\label{eq:pb}
&\mu_n(m) = p\left(x = \bar x_m|\Theta_n, z_n\right) \propto c\pi_{-n}(m)B_{ml},  \nonumber\\
&a_n = \argmin_{a\in \mathbb{A}}\mathbf{E}\{C(x,a)|\Theta_n, z_n\}= \argmin_{a\in \mathbb{A}}\mathbf{E}\{C_a'\mu_n\}, \nonumber\\
&\pi_{+n}(m)\propto c \pi_{-n}(m)\sum_{j=1}^{Z}\left[\prod_{\hat a \in \mathbf{A}-\{a_n\}}\mathbb{I}(C_{a_n}'B_{j}\pi_{-n} <C_{\hat a}'B_{j}\pi_{-n} )\right]B_{mj},\\\nonumber
\end{align}
where $c$ is a generic normalizing constant, $B_j = {\rm diag}(B_1j,\ldots, B_{Xj})$, and $\mathbb{I}(\cdot)$ is indicator function. Here, $C_a$ is the cost vector defined as $C_a = [C(1,a) ~~C(2,a)~~ \ldots ~~C(X,a)]$.

\end{Lemma}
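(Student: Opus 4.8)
The plan is to establish the three formulas in Lemma~\ref{lem:1} sequentially, since each one feeds into the next: the private belief enters the action rule, and both enter the after-action public belief update. The key observation is that $\pi_{-n}$ is by definition the posterior $p(x=\bar x_i \mid \Theta_n)$, so $\pi_{-n}$ already summarizes all network information available at node $n$; the agent only needs to fold in its fresh private observation $z_n = \bar z_l$.

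\textbf{Step 1: Private belief.} I would apply Bayes' rule to $\mu_n(m) = p(x=\bar x_m \mid \Theta_n, z_n = \bar z_l)$, splitting the conditioning into the prior-like term $p(x = \bar x_m \mid \Theta_n) = \pi_{-n}(m)$ and the likelihood $p(z_n = \bar z_l \mid x = \bar x_m, \Theta_n)$. The crucial point is that, given the state $x$, the private observation $z_n$ is conditionally independent of everything received from the network (this is the independence assumption on observations stated in Step~1 of Sec.~\ref{subsec:pf}), so the likelihood collapses to $B_{ml}$ from (\ref{eq:B}). Hence $\mu_n(m) \propto \pi_{-n}(m) B_{ml}$, with the normalizing constant absorbed into $c$.

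\textbf{Step 2: Local action.} Starting from (\ref{eq:action}), $a_n = \argmin_{a} \mathbf{E}\{C(x,a)\mid \Theta_n, z_n\}$, I would expand the conditional expectation as a sum over the state, $\sum_{m=1}^X C(m,a)\, p(x=\bar x_m \mid \Theta_n, z_n) = \sum_m C(m,a)\mu_n(m) = C_a' \mu_n$, using the definition of $C_a$. This gives the middle line directly.

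\textbf{Step 3: After-action public belief.} This is the main obstacle and requires the most care. By definition $\pi_{+n}(m) = p(x=\bar x_m \mid \Theta_n, a_n)$. I would again use Bayes' rule, $\pi_{+n}(m) \propto \pi_{-n}(m)\, p(a_n \mid x=\bar x_m, \Theta_n)$, and then compute the action likelihood by marginalizing over the (unobserved, from the administrator's viewpoint) private observation: $p(a_n \mid x=\bar x_m, \Theta_n) = \sum_{j=1}^Z p(a_n \mid z_n = \bar z_j, \Theta_n)\, p(z_n=\bar z_j \mid x=\bar x_m) = \sum_{j=1}^Z \mathbb{I}(a_n \text{ chosen when } z_n = \bar z_j)\, B_{mj}$. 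The indicator that action $a_n$ is selected given observation $\bar z_j$ unwinds, via Step~1 and Step~2, to the condition $C_{a_n}' \mu_n^{(j)} < C_{\hat a}' \mu_n^{(j)}$ for all competing actions $\hat a \neq a_n$, where $\mu_n^{(j)} \propto B_j \pi_{-n}$ is the private belief that would result from observation $\bar z_j$ (with $B_j = \mathrm{diag}(B_{1j},\ldots,B_{Xj})$). Since the normalization of $\mu_n^{(j)}$ is a positive constant common to both sides of the inequality, it cancels, leaving $C_{a_n}' B_j \pi_{-n} < C_{\hat a}' B_j \pi_{-n}$, which is exactly the indicator product in the displayed formula. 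Collecting terms and folding all normalizers into $c$ yields the third line. The subtle points to get right are (i) the conditional independence of $z_n$ from $\Theta_n$ given $x$, which is what lets the likelihood factor through $B$; (ii) the fact that ties in the $\argmin$ have probability zero (or are broken by a fixed rule), so the strict inequalities suffice; and (iii) that cancellation of the $\mu_n^{(j)}$ normalizer inside the indicator is legitimate precisely because it is strictly positive.
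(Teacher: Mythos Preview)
Your proposal is correct and follows essentially the same approach as the paper's proof: Bayes' rule for the private belief using conditional independence of $z_n$ from $\Theta_n$ given $x$, expansion of the conditional expectation for the action, and Bayes' rule plus marginalization over the unobserved $z_n$ (with the action event rewritten as a product of indicators) for the after-action public belief. If anything, your write-up is more careful than the paper's on the conditional-independence bookkeeping and on why the normalizer of $\mu_n^{(j)}$ may be dropped inside the indicator.
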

\begin{proof}
The proof is presented in Appendix \ref{ap:lem1}.
\end{proof}
Lemma~\ref{lem:1} summarizes the social learning problem considered in this paper. Node $n$ receives a set of after-action public beliefs and  combines them to compute $\pi_{-n}$ from which the private belief can be computed (using private observation $z_n$).  As described in Sec.\ref{sec:intro}, a major issue with the above protocol is data incest (because information in $\Theta_n$ are mistakenly considered to be independent in information aggregation process). The aim of this paper is to devise a data incest removal algorithm for the network administrator to deploy such that the estimates of agents are unbiased.

To formulate the data incest problem that arises in the five-step protocol of Sec.\ref{sec:intro}, two following types of social learning protocols are presented: (i) constrained social learning protocol which we introduce shortly (see Fig.\ref{BD1}), and (ii) idealized social learning which is investigated in Sec.\ref{sec:optimal}.
\begin{figure}[t]
\centering
\hspace{-1.2cm}\scalebox{.27}{\includegraphics{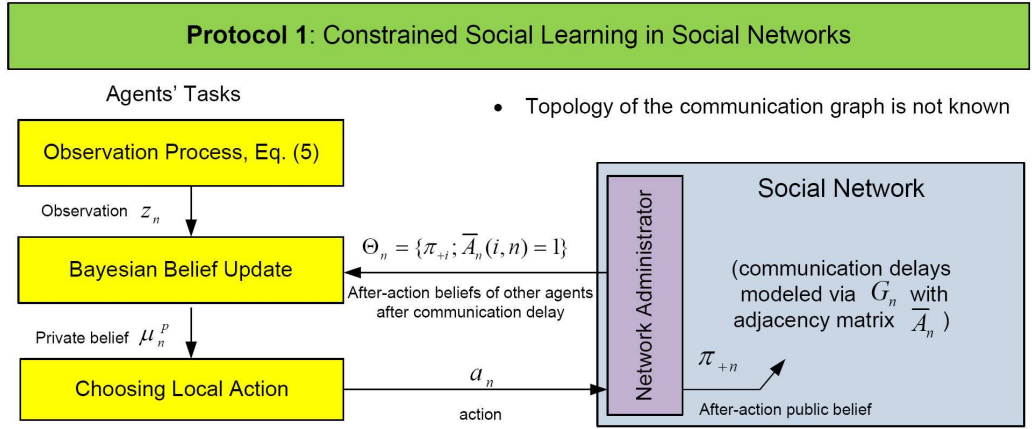}}
\caption{Protocol~1: Constrained social learning in social networks described in Sec.\ref{subsec:pf}. As a result of random (unknown) communication delays, data incest arises.}
\label{BD1}
\end{figure}
 The five-step constrained social learning protocol introduced in Sec.\ref{subsec:pf}, is illustrated in Fig.\ref{BD1}.
 Note that in the constrained social learning problem described in Protocol~1, agents do not have information about the communication graph. This is why the term ``constrained" is used.  With Protocol~1, the constrained social learning in social networks can be summarized as
\begin{eqnarray}\label{eq:setup1}&&\hspace{-.8cm}\left\{\begin{array}{l}
z_{n} \sim  B_{iz} ,\quad x = i, \quad \text{{ (observation process)}}\\
a_{n} = \mathcal{S}(\Theta_{n},z_{n}), \quad \text{(choosing local action)}\\
\pi_{+n} = \mathcal{A}(\Theta_n,a_n) \quad \text{(updating the after action belief)}
\end{array}\right.
\end{eqnarray}
In (\ref{eq:setup1}) Algorithm $\mathcal{A}$ is employed by the network administrator to update the after-action public belief. Due to the lack of knowledge about the communication graph  (and the recursive nature of Bayesian models), data incest arises in constrained social learning if algorithm $\mathcal{A}$ is not designed properly. Algorithm $\mathcal{S}$ in (\ref{eq:setup1}) is employed by each agent to choose its action; this algorithm can be constructed using the results of Lemma~\ref{lem:1}. The aim of this paper is to devise the algorithms $\mathcal{A}$ and $\mathcal{S}$ such that the public belief of social learning (or equivalently actions $a_n$ for all $n = 1,2,\ldots$) are not affected by data incest. 

\textbf{Remark 1:} In order to choose an action from the finite set of all possible actions, agents minimize a cost function. This cost function can be interpreted in terms of the reputation of agents in online reputation systems. For example if the quality of a restaurant is good and an agent wrote a bad review for it in Yelp and he continues to do so for other restaurants, his reputation becomes lower among the users of Yelp. Consequently, other people ignore reviews of that (low-reputation) agent in evaluation of their opinion about the social unit under study (restaurant). Therefore, agents  minimize the penalty of writing inaccurate reviews (or equivalently increase their reputations) by choosing proper actions. This behavior is modeled by minimizing a cost function in our social learning model.

\textbf{Remark 2:} In comparison to the public belief which can be computed by the network administrator (who monitors the agents' actions and communication graph), the agents' private beliefs  cannot be computed by the network administrator. The private belief depends on the local observation which is not available to the network. Note that in Step~2 of the constrained social learning Protocol~1, the results of Lemma~\ref{lem:1} are used to compute $ \mu_n$ using $z_n$ and $\pi_{-n}$.

\section{Data Incest Removal Algorithm}\label{sec:optimal}
So far in this paper, Bayesian social learning model and communication among agents in social networks have been described. This section presents the main result of this paper, namely the solution to the constrained social learning problem (\ref{eq:setup1}). We propose a data incest removal algorithm such that the public belief of social learning (and consequently the chosen action) is not affected by data incest. To devise the data incest removal algorithm, an idealized framework is presented that prevents data incest as we will describe shortly. Comparing the public belief of the idealized framework with the same of the constrained social learning, the data incest removal algorithm is specified. This data incest removal algorithm is used by the network administrator and replaces Step 5 of the social learning protocol presented in Sec.\ref{subsec:pf}. A necessary and sufficient condition for the data incest removal problem is also presented in this section.
\subsection{The Idealized Benchmark for Data Incest Free Social Learning in Social Networks}\label{subsec:bench2}
In this subsection, an idealized (and therefore impractical) framework that will be used as a benchmark to derive the constrained social learning protocol, is described. In the idealized protocol, it is assumed that the entire history of actions  along with the communication graph are known at each node.  Due to the knowledge about the entire history of actions and the communication graph (dependencies among actions) in the idealized framework, data incest does not arise\footnote{In the constrained social learning algorithm, each node receives the most recent after-action public beliefs of its neighbors or equivalently the updated public belief.}. Define \begin{equation}
\Theta^{\rm full}_{n} = \{a_i; \quad  (1\leq i\leq n-1) \text{ and } T_n(i,n) = 1 \}.
\end{equation}
Here, $T_n$ is the transitive closure matrix of the communication graph $G_n$. In the idealized framework, the public belief can be written as \begin{equation}\label{temp}
p(x|\Theta_{n}^{\rm full}) \propto \pi_0 \prod_{a_i \in \Theta_n^{\rm full}} p(a_i|x, S_i),\end{equation} where $S_i \subset \Theta_n^{\rm full}$ denotes the set of actions that $a_i$ depends on them. The public belief in the idealized social learning is free of data incest, as it can be inferred from (\ref{temp}).  The idealized social learning in social networks (Protocol~2) is illustrated in Fig.\ref{BD3}. The private belief of node $n$ in the idealized social learning is denoted by $\hat \mu_n$.

\begin{figure}[t]
\centering
\hspace{-1.2cm}\scalebox{.27}{\includegraphics{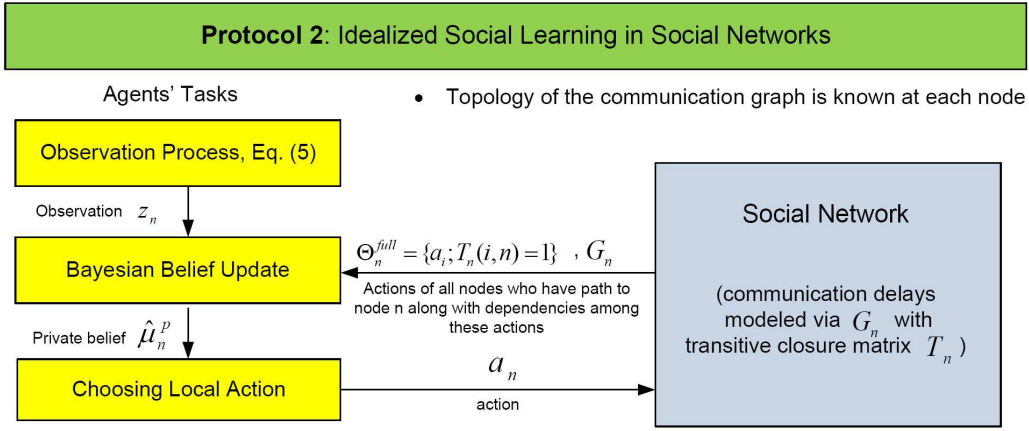}}
\caption{Protocol~2: Idealized benchmark social learning in social networks. In this protocol, the complete history of actions chosen by agents and the communication graph are known. Hence, data incest does not arise. This benchmark protocol will be used to design the data incest removal protocol.}
\label{BD3}
\end{figure}

With Protocol~2, the idealized social learning problem can be summarized as
\begin{eqnarray}\label{eq:setup2}&&\hspace{-.8cm}\left\{\begin{array}{l}
G_{n} = (V_n, E_n) \hspace{2mm}  \text{\small{is given.}}\\
z_{n} \sim  B_{iz} ,\quad x = i, \quad \text{\small{ (observation process)}}\\
a_{n} = \mathcal{F}(\Theta^{\rm full}_{n},z_{n},G_n), \quad \text{(choosing local action)}\\
\hat\pi_{+n} = \mathcal{B}(\Theta^{\rm full}_n,a_n) \quad \text{(updating the after action belief)}
\end{array}\right.
\end{eqnarray}
 We use $\hat \pi_{-n}$ and $\hat \pi_{+n}$ to denote the public belief and the after-action public belief of social learning at node $n$ in the idealized social learning Protocol~2, respectively. Algorithm $\mathcal{B}$ in the social learning problem (\ref{eq:setup2}) is the ordinary recursive Bayesian filter to update $\hat\pi_{+n}$. In Algorithm $\mathcal{B}$, first $\hat\pi_{-n}$ is computed via (\ref{temp}) and then the results of Lemma~\ref{lem:1} are applied to choose $a_n$. Note that if there exists a path between node $i$ and node $n$, then action $a_i\in \Theta_n^{\rm full}$. Since the history of actions and the communication topology are available in the idealized social learning Protocol~2, $\hat\pi^{p}_n$ is free of data incest.

\textbf{Remark 3:} The idealized social learning Protocol~2 requires agents to know the entire history of actions and also the dependencies among these actions which seems impractical in large social networks. However, this protocol serves as a benchmark against the constrained social learning Protocol~1, hence  its practicality is irrelevant. The aim of this idealized benchmark protocol is to specify the data incest removal algorithm which can be done by comparing the social learning public belief of Protocol~1 and Protocol~2 as we will show shortly.

\subsection{The Data Incest Free Belief in the Idealized Social Learning Protocol~2}

The goal of this paper is to replace Step 5 of the five-step constrained social learning Protocol~1 with an algorithm that mitigates data incest. As described earlier, to solve the data incest management problem, we introduced the idealized social learning Protocol~2 that prevents data incest. By comparing the after-action public beliefs of agents in the idealized social learning Protocol~2 with those in the constrained social learning Protocol~1, the data incest removal algorithm can be invented. In this subsection, an expression is derived for the after-action public beliefs of agents in the idealized social learning Protocol~2. Let $\theta_n^{\rm full}$ denote the logarithm of the after-action public belief $\hat \pi_{+n}$  in the idealized social learning Protocol~2, that is
  \begin{equation}\label{eq:deftheta}
  \theta^{\rm full}_n = \log\left(p(x|\Theta_n^{\rm full},a_n,G_n)\right).
  \end{equation}
Theorem~\ref{prop:ideal} below gives an expression for the after-action public belief in the idealized social Protocol~2.
 \begin{Theorem} \label{prop:ideal}
\it Consider problem (\ref{eq:setup2}) with the idealized social learning Protocol~2. The data incest free after-action public belief of node $n$ (which represents agent $s$ at time $k$ according to re-indexing equation (\ref{reindexing_scheme})) is:

\begin{equation}\label{eq:pbideal}
 \theta_{n}^{\rm full} = \sum_{i=1}^{n-1} t_n(i)\nu_i + \nu_n,
\end{equation}
where $\nu_k$ denotes $\log\left(p(a_{k}|x,S_k)\right)$.
 Recall that $t_n$ defined in (\ref{deft}) in Appendix~\ref{subsec:graphtheory} as the first $n-1$ elements of the $n^{th}$ column of $T_n$.
\end{Theorem}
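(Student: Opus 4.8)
The plan is to derive the identity (\ref{eq:pbideal}) directly from the product representation (\ref{temp}) of the idealized public belief, with no recursion needed. First I would note that the factors appearing in (\ref{temp}) are indexed exactly by the nodes whose actions reach node $n$, that is by $\{\,i<n : a_i\in\Theta_n^{\rm full}\,\}$. By the definition of $\Theta_n^{\rm full}$ this index set equals $\{\,i<n : T_n(i,n)=1\,\}$, and by the definition of $t_n$ as the first $n-1$ entries of the $n^{th}$ column of $T_n$ (see (\ref{deft})) it equals $\{\,i : t_n(i)=1\,\}$. Hence in (\ref{temp}) the product runs over precisely those $i\in\{1,\dots,n-1\}$ with $t_n(i)=1$.

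Taking logarithms in (\ref{temp}) then converts the product into a sum, and since $t_n(i)\in\{0,1\}$ it acts as the indicator selecting exactly these terms, so $\log\hat\pi_{-n}=\log p(x\mid\Theta_n^{\rm full})=\sum_{i=1}^{n-1}t_n(i)\,\nu_i$, up to the additive term $\log\pi_0$ (a fixed vector that the statement suppresses, e.g.\ vanishing under a uniform prior or attachable to the $\nu$ of a source node). It remains to pass from the public belief $\hat\pi_{-n}$ to the after-action public belief $\hat\pi_{+n}$: conditioning additionally on $a_n$, Bayes' rule gives $p(x\mid\Theta_n^{\rm full},a_n,G_n)\propto p(a_n\mid x,S_n)\,p(x\mid\Theta_n^{\rm full})$, where $p(a_n\mid x,\Theta_n^{\rm full},G_n)=p(a_n\mid x,S_n)$ because in the idealized protocol $a_n$ is a deterministic function of $z_n$ and $\hat\pi_{-n}$ (Lemma~\ref{lem:1}) and $\hat\pi_{-n}$ depends on the past only through $S_n$ and $G_n$. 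Taking logarithms adds $\nu_n=\log p(a_n\mid x,S_n)$, which yields $\theta_n^{\rm full}=\sum_{i=1}^{n-1}t_n(i)\,\nu_i+\nu_n$, as claimed.

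The step that warrants the most care — and the one I expect to be the real obstacle — is the justification of the factorization (\ref{temp}) itself, i.e.\ that in the idealized protocol the joint conditional law of the reachable actions is Markov with respect to $G_n$ with parent sets $S_i$. This is not automatic, since the actions are \emph{not} conditionally independent given $x$ (each $a_j$ depends on the actions in $S_j$). The argument I would supply is the standard one for Bayesian networks: order the reachable nodes by their scalar index and apply the chain rule $p(\{a_i\}\mid x)=\prod_i p\bigl(a_i\mid x,\{\,a_j: j<i,\, a_j\in\Theta_n^{\rm full}\,\}\bigr)$; then use (i) the conditional independence of the private observations $z_n$ given $x$ assumed in Sec.~\ref{subsec:pf}, and (ii) the fact, from Lemma~\ref{lem:1} applied at each node $i$ in the idealized protocol, that $a_i$ is a deterministic function of $z_i$ once $x$ and $S_i$ are fixed (because $\hat\pi_{-i}$ is itself determined by $S_i$ and $G_i$). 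Together (i) and (ii) give $a_i\perp\{\,a_j:j<i\,\}\setminus S_i \mid x,S_i$, so each chain-rule factor collapses to $p(a_i\mid x,S_i)$, and multiplying by the prior $\pi_0$ recovers (\ref{temp}). Once this conditional-independence structure is established, everything else is the elementary bookkeeping described above.
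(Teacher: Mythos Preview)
Your proposal is correct and follows essentially the same route as the paper's own proof: identify the index set of the product in (\ref{temp}) with $\{i:t_n(i)=1\}$, take logarithms to turn the product into the sum $\sum_{i=1}^{n-1}t_n(i)\nu_i$, add the contribution $\nu_n$ of $a_n$, and drop the additive constants $\log\pi_0$ and $\log c$. The only difference is one of thoroughness: the paper simply asserts the factorization (\ref{temp}) (``Bayes' theorem is used recursively''), whereas you supply the Bayesian-network/chain-rule argument that actually justifies collapsing each factor to $p(a_i\mid x,S_i)$; this extra care is appropriate and does not change the underlying approach.
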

\begin{proof}
The proof is presented in Appendix~\ref{subsec:proofp2}.
\end{proof}
As can be seen in (\ref{eq:pbideal}), the (logarithm of the) after-action public belief of node $n$ can be written as a linear function in terms of $\nu_i$ using $t_n$. Due to this linearity, the data incest removal algorithm can be constructed as we will explain later in this section.  Also (\ref{eq:pbideal}) implies that the optimal data incest free after-action public beliefs of agents in the idealized social learning Protocol~2 depend on the communication graph explicitly in terms of the transitive closure matrix\footnote{See (\ref{eq:transitivieclosurematrix}) in Appendix~\ref{subsec:graphtheory}.}. Basically  the non-zero elements of $t_n$ show all nodes who have a path to node $n$ and thus their actions contribute in the formation of the private belief of node $n$. Eq. (\ref{eq:pbideal}) is quite intuitive from the fact that each agent employs a recursive Bayesian filter to combine its private observation with the information received from the network.
\subsection{Data Incest Removal Algorithm for Problem (\ref{eq:setup1}) With Constrained Social Learning Protocol~1}\label{subsec:optimal}
Given the expression for the after-action public belief of the idealized social learning Protocol~2, the aim here is to propose an optimal information aggregation scheme (that replaces Step~5) such that the after-action public belief of the constrained social learning Protocol~1 is equal to the same of the idealized social learning Protocol~2 (which is free of data incest)\footnote{From that, algorithm $\mathcal{A}$ in problem (\ref{eq:setup1}) with constrained social learning Protocol~1 can be specified.}.   That is,
  \begin{equation}\label{eq:mis1}
  p(x|\Theta_{n},a_{n}) = p(x|\Theta_n^{\rm full},a_n,G_n).
  \end{equation}
Similar to $\theta^{\rm full}_n$, let $\hat \theta_n$ denote the logarithm of the after action public belief of node $n$,
\begin{align}\label{eq:defprop}
&\hat\theta_n = \log\left(p(x|\Theta_{n},a_n)\right).
\end{align}
 We propose the following optimal information aggregation scheme to evaluate the after-action public belief using a $n-1$ dimensional weight vector $w_n$ as follows,
\begin{align}\label{def:constraintestimate}
\hat \theta_{n} =  \sum_{i=1}^{n-1} w_n(i)\hat\theta_i + \nu_{n},
\end{align}
where $w_n$ with elements $w_n(i)$ ($1\leq i\leq n-1$) is defined more precisely in (\ref{eq:wn}). Using optimal information aggregation scheme (\ref{def:constraintestimate}) and (\ref{eq:pb}) in Lemma~\ref{lem:1}, algorithm $\mathcal{A}$ in (\ref{eq:setup1}) can be specified.\\

 \textbf{Remark 4:} The optimal information aggregation scheme (\ref{def:constraintestimate}) is deployed by the network administrator in Step 5 of the social learning protocol presented in Sec.\ref{subsec:pf} to combine the received information (after-action public beliefs or equivalently actions) form other nodes and computes $\sum_{i=1}^{n-1} w_n(i)\hat\theta_i$, this is the public belief of social learning at node $n$. Then, node $n$ updates its private belief based on the most updated public belief (provided by the network administrator) and chooses its action $a_n$ accordingly and then transmits it over the network. Then, the network administrator evaluates $\nu_n$ and updates the after-action public belief by computing $\hat \theta_{n} =  \sum_{i=1}^{n-1} w_n(i)\hat\theta_i + \nu_{n}$. Alternatively, the network administrator can compute the most recent after-action public belief of nodes and transmits it over the network. In this case, node $n$ combines the received after-action public beliefs using the optimal weight vector $w_n$ and chooses its action $a_n$. Then, the action $a_n$ is broadcasted to the network administrator and the after-action public belief of node $n$ is updated accordingly.

 The weight vector $w_n$ depends on the communication graph and can be computed simply by (\ref{eq:wn}). Theorem~\ref{prop:mis} below proves that by using the optimal information aggregation scheme (\ref{def:constraintestimate}) with $w_n$ defined in (\ref{eq:wn}), data incest can be completely mitigated. However, for some network topologies, it is not possible to remove data incest completely.  The following constraint presents the necessary and sufficient condition on the network for the exact data incest removal.\\
\textbf{Topological Constraint 1:} Consider the constrained social learning problem (\ref{eq:setup1}) with Protocol~1. Then, the weight vector $w_n$ used in optimal information aggregation  scheme (\ref{def:constraintestimate}) satisfies the topological constraints if $\forall j\in\{1,\ldots,n-1\}$ and $\forall n\in\{1,\ldots,N\}$ \begin{eqnarray}\label{constraint}
b_n(j)=0   & \implies w_n(j)= 0,
\end{eqnarray}
where $b_n$ is defined in (\ref{deft}) and denotes the $n$-th column of the adjacency matrix of $G_n$. Basically Constraint~1 puts the ``\textit{availability constraint}'' on the communication graph. This means that if information of node $j$ is required at node $n$ ($w_n(j) \neq 0$), there should be a communication link between node $j$ and node $n$ ($b_n(j) \neq 0$).  Assuming that Constraint 1 holds, Theorem~\ref{prop:mis} below ensures that the (after-action) public belief of nodes in problem  (\ref{eq:setup1}) with the constrained social learning Protocol~1 is identical to the same of the problem (\ref{eq:setup2}) with the idealized social learning Protocol~2.
\begin{Theorem}\label{prop:mis}
 \it Consider problem (\ref{eq:setup1}) with the constrained social learning Protocol~1 of Sec.\ref{sec:social}. Then using the optimal information aggregation scheme (\ref{def:constraintestimate}), data incest can be mitigated by using the optimal set of weights $\{w_n\}_{n\in\{1,\ldots,N\}}$ given that the topological Constraint~1 is satisfied. The optimal weight vector is
\begin{equation}\label{eq:wn}
w_n =  t_n\left(\left(T_{n-1}\right)'\right)^{-1}.
\end{equation}
By using the optimal combination scheme (\ref{def:constraintestimate}) and optimal weight vector defined in (\ref{eq:wn}), the data incest in social learning problem (\ref{eq:setup1}) is completely mitigated, that is $\hat{\theta}_{n}  =  \theta_{n}^{\rm full}$ if $w_n$ satisfies topological Constraint~1 where $\hat{\theta}_n$ and $\theta_n^{\rm full}$ are defined in (\ref{eq:defprop}) and (\ref{eq:deftheta}) respectively. Recall that $t_n$ is defined in (\ref{deft}) as the first $n-1$ elements of the $n^{th}$ column of $T_n$.
\end{Theorem}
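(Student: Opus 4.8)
The plan is to prove the identity $\hat\theta_n=\theta_n^{\rm full}$ by induction on the node index $n$, using Theorem~\ref{prop:ideal} to supply a closed form for $\theta_i^{\rm full}$ and then choosing the weights so that the recursion (\ref{def:constraintestimate}) reproduces that form. The enabling structural fact I would establish first is a block decomposition of the transitive closure matrices: since $G_{n-1}$ is a sub-graph of $G_n$ (Theorem~\ref{Theorem:informationflowDAG}) and the edges respect the temporal order encoded in (\ref{reindexing_scheme}), no edge of $G_n$ enters any of the first $n-1$ nodes from node $n$, so reachability among the first $n-1$ nodes is identical in $G_n$ and in $G_{n-1}$; hence the leading $(n-1)\times(n-1)$ block of $T_n$ equals $T_{n-1}$ and $T_n=\left[\begin{smallmatrix}T_{n-1}&t_n\\0&1\end{smallmatrix}\right]$. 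In particular $t_i(j)=T_i(j,i)=T_{n-1}(j,i)$ for all $j<i<n$, and since a DAG admits a topological order (here, the index order), $T_{n-1}$ is upper uni-triangular, hence nonsingular, so $\bigl((T_{n-1})'\bigr)^{-1}$ in (\ref{eq:wn}) is well defined.

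For the base case $n=1$ there are no predecessors, so both (\ref{def:constraintestimate}) and (\ref{eq:pbideal}) reduce to $\nu_1$, and node $1$ chooses $a_1$ identically in both protocols (from $\pi_0$ and its own observation $z_1$); thus $\hat\theta_1=\nu_1=\theta_1^{\rm full}$. For the inductive step assume $\hat\theta_i=\theta_i^{\rm full}$ for all $i<n$. By Theorem~\ref{prop:ideal} together with the block identity, $\theta_i^{\rm full}=\sum_{j=1}^{n-1}T_{n-1}(j,i)\,\nu_j$ for each $i<n$; substituting the inductive hypothesis into (\ref{def:constraintestimate}) gives the pre-action part of $\hat\theta_n$ as $\sum_{i=1}^{n-1}w_n(i)\theta_i^{\rm full}=\sum_{j=1}^{n-1}\bigl(\sum_{i=1}^{n-1}w_n(i)\,T_{n-1}(j,i)\bigr)\nu_j=\bigl(w_n(T_{n-1})'\bigr)\,(\nu_1,\ldots,\nu_{n-1})'$. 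Choosing $w_n=t_n\bigl((T_{n-1})'\bigr)^{-1}$ as in (\ref{eq:wn}) collapses this to $\sum_{j=1}^{n-1}t_n(j)\,\nu_j$, which by Theorem~\ref{prop:ideal} equals $\theta_n^{\rm full}-\nu_n$, i.e. the logarithm of the incest-free pre-action public belief $\hat\pi_{-n}$. It then remains to identify the $\nu_n$ terms: since the pre-action public belief produced at node $n$ under Protocol~1 now coincides with the incest-free pre-action belief of Protocol~2, Lemma~\ref{lem:1} applied with the common observation $z_n$ shows node $n$ forms the same private belief and hence selects the same action $a_n$ in both protocols, so $\nu_n=\log p(a_n|x,S_n)$ is the same quantity on both sides and $\hat\theta_n=(\theta_n^{\rm full}-\nu_n)+\nu_n=\theta_n^{\rm full}$, closing the induction.

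Finally I would record where Constraint~1 enters. The aggregation rule (\ref{def:constraintestimate}) is an admissible Protocol~1 rule only if it references after-action beliefs actually delivered to node $n$, i.e. only $\hat\theta_i$ with $b_n(i)=1$ (recall $\Theta_n$ in (\ref{eq:theta})); thus the weight choice (\ref{eq:wn}) is usable precisely when $b_n(j)=0\Rightarrow w_n(j)=0$, which is Constraint~1, and when this is violated no admissible weight vector can make the sum equal $\theta_n^{\rm full}$ — the required weight is unique by invertibility of $(T_{n-1})'$ but is unavailable — so exact incest removal is impossible, which is the negative direction alluded to in the statement.

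The step I expect to be the main obstacle is the self-referential structure of the induction: beliefs agreeing is what lets us conclude the actions agree, and the actions agreeing is what lets us conclude the $\nu_n$ terms match, so the inductive hypothesis must be phrased as ``the pre-action public beliefs and all chosen actions agree through node $m$'' rather than a bare equality of log-beliefs. A secondary nuisance is bookkeeping of the $\propto$/normalization constants in (\ref{temp}) and (\ref{eq:pb}); this is best handled by working throughout with unnormalized log-beliefs, so that the products in (\ref{temp}) become the clean sums in (\ref{eq:pbideal}) and renormalization is performed only once, at the end.
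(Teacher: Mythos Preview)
Your proposal is correct and follows essentially the same route as the paper: both use Theorem~\ref{prop:ideal} to write $\theta_i^{\rm full}$ linearly in the $\nu_j$'s (equivalently $\theta_{1:n-1}^{\rm full}=(T_{n-1})'\nu_{1:n-1}$), substitute into the aggregation rule (\ref{def:constraintestimate}), and observe that $w_n(T_{n-1})'=t_n$ forces the choice (\ref{eq:wn}), with invertibility of $T_{n-1}$ coming from its upper uni-triangularity. Your write-up is in fact more complete than the paper's, which omits the forward induction as ``straightforward'' and does not spell out, as you do via Lemma~\ref{lem:1}, why the equality of pre-action public beliefs forces the actions---and hence the $\nu_n$ terms---to coincide.
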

\begin{proof}
The proof is presented in Appendix~\ref{subsec:proofp3}.
\end{proof}
In the proposed optimal information aggregation scheme (\ref{def:constraintestimate}), the after-action public belief of each node (which represents an agent at specific time instant) can be written as a linear combination of the received information from the network and the local action (combined with the optimal weight vector $w_n$).  This is quite intuitive from the linearity of the after-action public belief in the idealized social learning Protocol~2. Theorem~\ref{prop:ideal} derives the after-action public belief of agents in problem (\ref{eq:setup2}) with the idealized social learning Protocol~2, $\theta_n^{\rm full}$, as a linear function of received information from the social network. Then, assuming that $\hat\theta_{1:n-1} = \theta_{1:n-1}^{\rm full}$, the optimal weight vector $w_n$ is specified such that $\hat\theta_n =\theta^{\rm full}_n$. Theorem~\ref{prop:mis} proves that if $w_n =t_n\left(\left(T_{n-1}\right)'\right)^{-1}$ then $\hat\theta_n =\theta^{\rm full}_n$.

Using the optimal information aggregation scheme (\ref{def:constraintestimate}), the five-step Bayesian social learning protocol in Sec.\ref{subsec:pf} with data incest removal algorithm can be summarized as
\begin{algorithm}[H]\floatname{algorithm}{Algorithm} \begin{minipage}{17cm}\footnotesize \vspace{0mm} {\sf   Step 1. Observation process}: Private observation vector $z_n$ is obtained according to (\ref{eq:B}).\\
{\sf Step 2. Private belief}: Node $n$ accesses the network and evaluates its private belief according to (\ref{eq:privateb1}) using the most updated public belief $\pi_{-n}$. \\
{\sf Step 3. Local action}: Action $a_n$ is chosen via (\ref{eq:action}).\\
{\sf Step 4. Social network}: The network administrator evaluates (maps $a_n$ to) the after-action public belief using the information aggregation scheme (\ref{def:constraintestimate}) and transmits it over the network.\\
{\sf Step 5. Public belief update}: The network administrator provides the optimal weight vector $w_n$ to nodes to combine the information received from the network $\Theta_n$ or alternatively it can provides the most up-to-date public belief $\pi_{-n}$ to each node.
\caption{Constrained Bayesian social learning with data incest removal algorithm at each node $n$}\label{protocol:cons}
\end{minipage}
\end{algorithm}


\textbf{Discussion of topological constraint (\ref{constraint}):} The non-zero elements of $w_n$ depict the nodes whose information are required at node $n$ to remove data incest. This puts a topological constraint on the communication graph. If $w_n(j)$ is non-zero, this means that information of node $j$ is needed at node $n$ and there should be an edge in $G_n$ that connects node $j$ to node $n$, this is the topological Constraint~1. Constraint~1 ensures that the essential elements for data incest removal are available at  node $n$ and Theorem~\ref{prop:mis} specifies the exact data incest removal algorithm. From Theorem~\ref{prop:mis}, it is simple to show that Constraint~1 is a necessary and sufficient condition for data incest removal in learning problem (\ref{eq:setup1}). Consider two examples of communication graph shown in Fig.\ref{tp}.
\begin{figure}[h]
\centering
\begin{minipage}[b]{.45\textwidth}
\hspace{-.5cm}\scalebox{.3}{\includegraphics{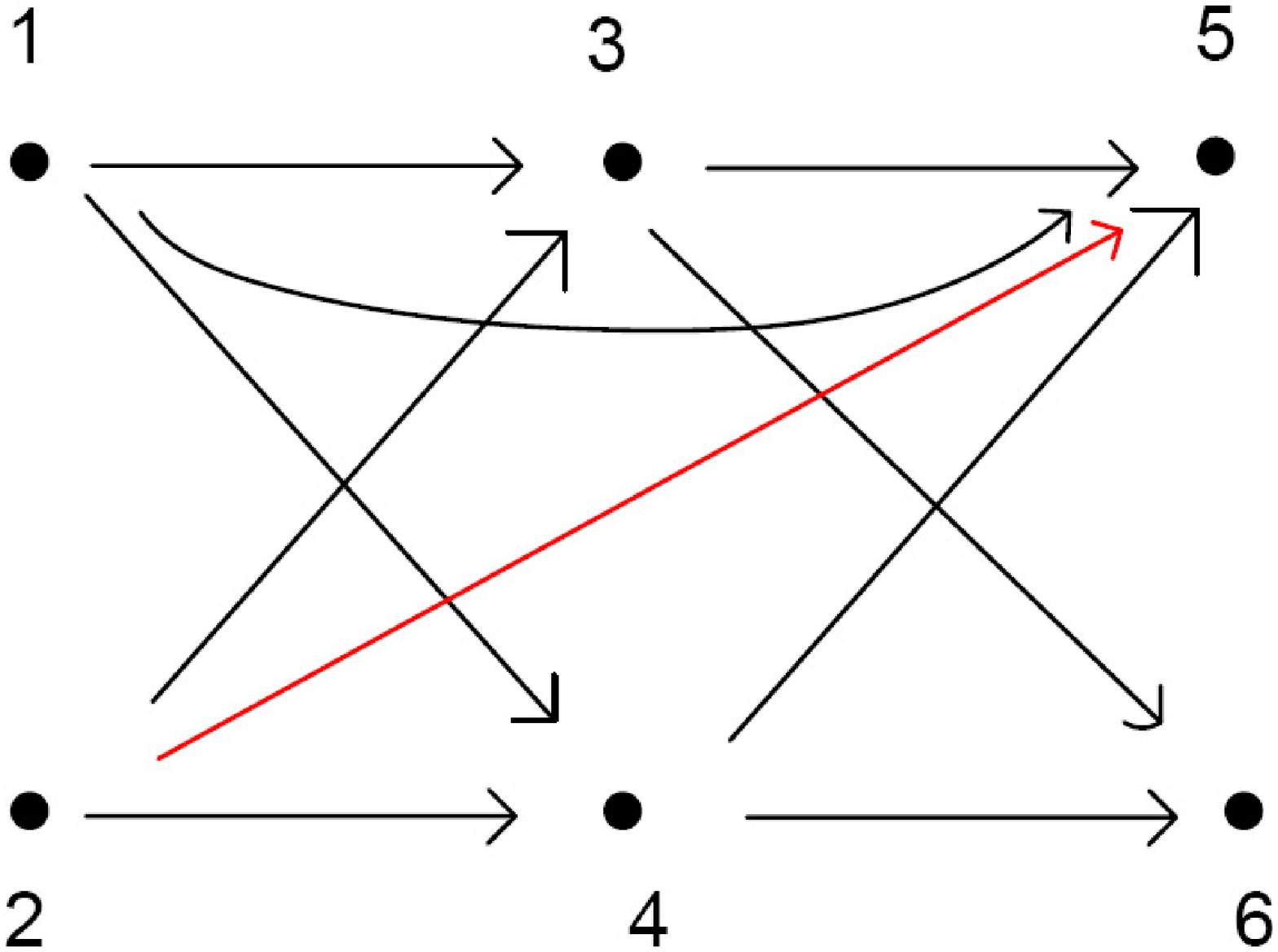}}
\subcaption{}
\label{diss}
\end{minipage}
\begin{minipage}[b]{.45\textwidth}
\hspace{.5cm}\scalebox{.3}{\includegraphics{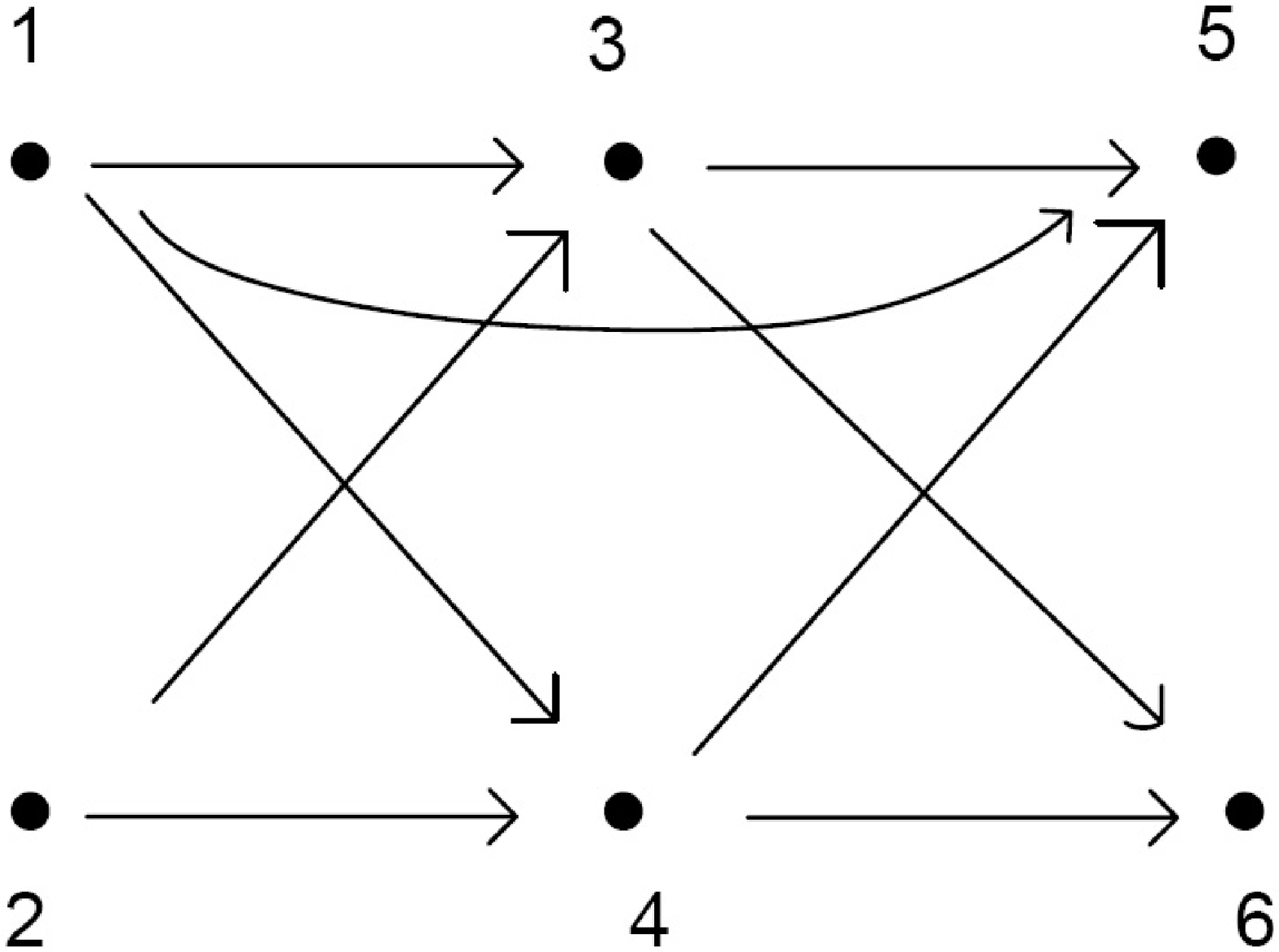}}
\subcaption{}
\label{diss2}
\end{minipage}
\caption{\label{tp} Two examples of networks: (a) satisfies the topological constraint, and (b) does not satisfy the topological constraint.}
\end{figure}

The optimal weight vector at node $5$ for both networks of Fig.\ref{tp} computed from (\ref{eq:wn}) is $w_5 = [-1, -1, 1, 1]$. This means that there should be a link between node $2$ and node $5$ for exact data incest removal according to the topological constraint (\ref{constraint}). Hence, Constraint~1 does not hold for the network of Fig.\ref{diss2}, while the topological constraint is satisfied in network depicted in Fig.\ref{diss}. Also as it is clear from the network shown in Fig.\ref{diss}, there is no need for the communication graph to be a tree.

\subsection{Discussion of Data Incest Removal in Social Learning}\label{subsec:dis} Here, we discuss the application of data incest removal Algorithm~1 (presented in Theorem~\ref{prop:mis}) in two examples of multi-agent state estimation problem which are presented in Sec.\ref{subsec:context}: (i) online reputation systems, and (ii) target localization using social networks, see Fig.\ref{BD4}. As explained in Sec.\ref{sec:intro}, data incest makes the estimates of the underlying state of nature biased in these two examples. Both problems can be formulated using the five-step constrained social learning protocol presented in Sec.\ref{subsec:pf}. As illustrated in Fig.\ref{model}, agents observe the underlying state of nature in noise  and practice social learning to choose an action such that a local cost function is minimized. But as a result of unknown communication graph  and the recursive nature of Bayesian estimators, data incest or abusive re-use of information occurs. To mitigate data incest, the network administrator plays an intermediating role. Instead of transmitting the communication graph and the set of all actions, the network administrator monitors all the information exchanges and provides the data incest free public belief of social learning at each node.
\begin{figure}[t]
\centering
\hspace{0cm}\scalebox{.3}{\includegraphics{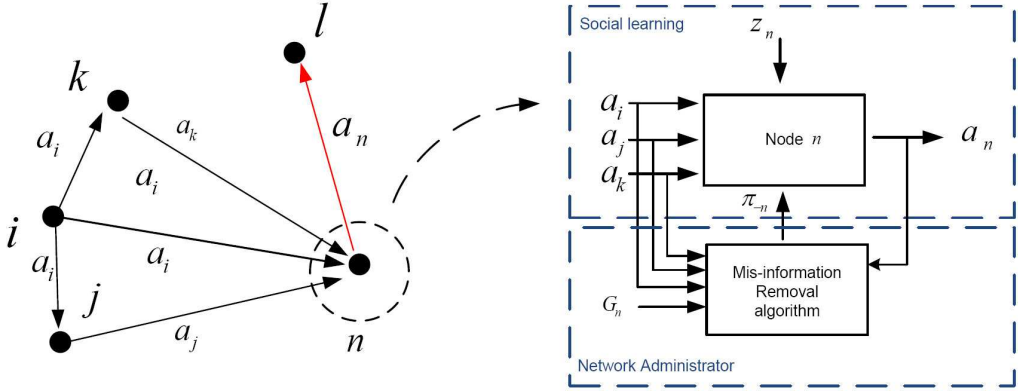}}
\caption{Data incest removal algorithm employed by network administrator in the state estimation problem over social network. The underlying state of nature could be geographical coordinates of an event (target localization problem) or reputation of a social unit (online reputation systems).}
\label{model}
\end{figure}
To compute the data incest free public belief, the network administrator uses the optimal information aggregation scheme (\ref{def:constraintestimate}) with  the optimal weight vector $w_n$, see (\ref{eq:wn})\footnote{Also as discussed in Sec.\ref{sec:intro}, an alternative method is to change action $a_n$ to another action $a^*_n$ such that the new after-action public belief is similar to that computed via  (\ref{def:constraintestimate}).}. Using the most updated public belief and its own private observation $z_n$, node $n$ evaluates its private belief. Based on this private belief (which is free of data incest), action $a_n$ is chosen and transmitted it over the network. Theorem~\ref{prop:mis} ensures that using the optimal weight vector $w_n$ and given that the communication graph satisfies the topological Constraint 1, the action $a_n$ is not affected by data incest and performance of the state estimation is improved.\\

\textbf{Remark 5:} The results of this paper can be applied to a scenario where the network administrator provides the after-action public beliefs to agents (instead of actions or the updated public belief). In this scenario, agents combine the received after-action public beliefs using the optimal weight vector $w_n$ to compute the updated public belief and then evaluate their private belief accordingly.

\section{Numerical Examples}\label{sec:num}
In this section, numerical examples are given to illustrate the performance of data incest removal Algorithm~1 presented in   Sec.\ref{sec:optimal}. As described in the five-step protocol of Sec.\ref{sec:intro}, agents interact on a graph to estimate an underlying state of nature (which represents the location of a target event in target localization problem, or the reputation of a social unit in online reputation systems). The underlying state of nature $x$ is a random variable uniformly chosen from $\mathbf{X} =  \{1,2,\cdots,20\}$, and actions are chosen from $\mathbf{A} = \{1,2,\ldots 10\}$. We consider the following three scenarios for each of four different types of social networks: \begin{enumerate}[(i)]
\item Constrained social learning without data incest removal algorithm (data incest occurs) depicted with dash-dot line
\item Constrained social learning with Protocol~1 with data incest removal algorithm depicted with dashed line
\item Idealized framework where each node has the entire history of raw observations and thus data incest cannot propagate. This scenario is only simulated for comparison purposes and is depicted by solid line.
\end{enumerate}
The effect of data incest on estimation problem and the performance of the data incest removal algorithm, proposed in Sec.\ref{sec:optimal}, is investigated for the networks shown in Fig.\ref{csomm}.

We first consider a communication graph with $41$ nodes. The communication graph under study, which is shown in Fig.\ref{CG}, satisfies the topological constraint (\ref{constraint}). The action of node 1 reaches all other nodes and node 41 receives all actions of previous 40 nodes (some edges are omitted from the figure to make it more clear).
\begin{figure}[htb]
\begin{minipage}[b]{0.33\textwidth}
\centering
\hspace{0cm}\scalebox{1}{\includegraphics[width=\textwidth]{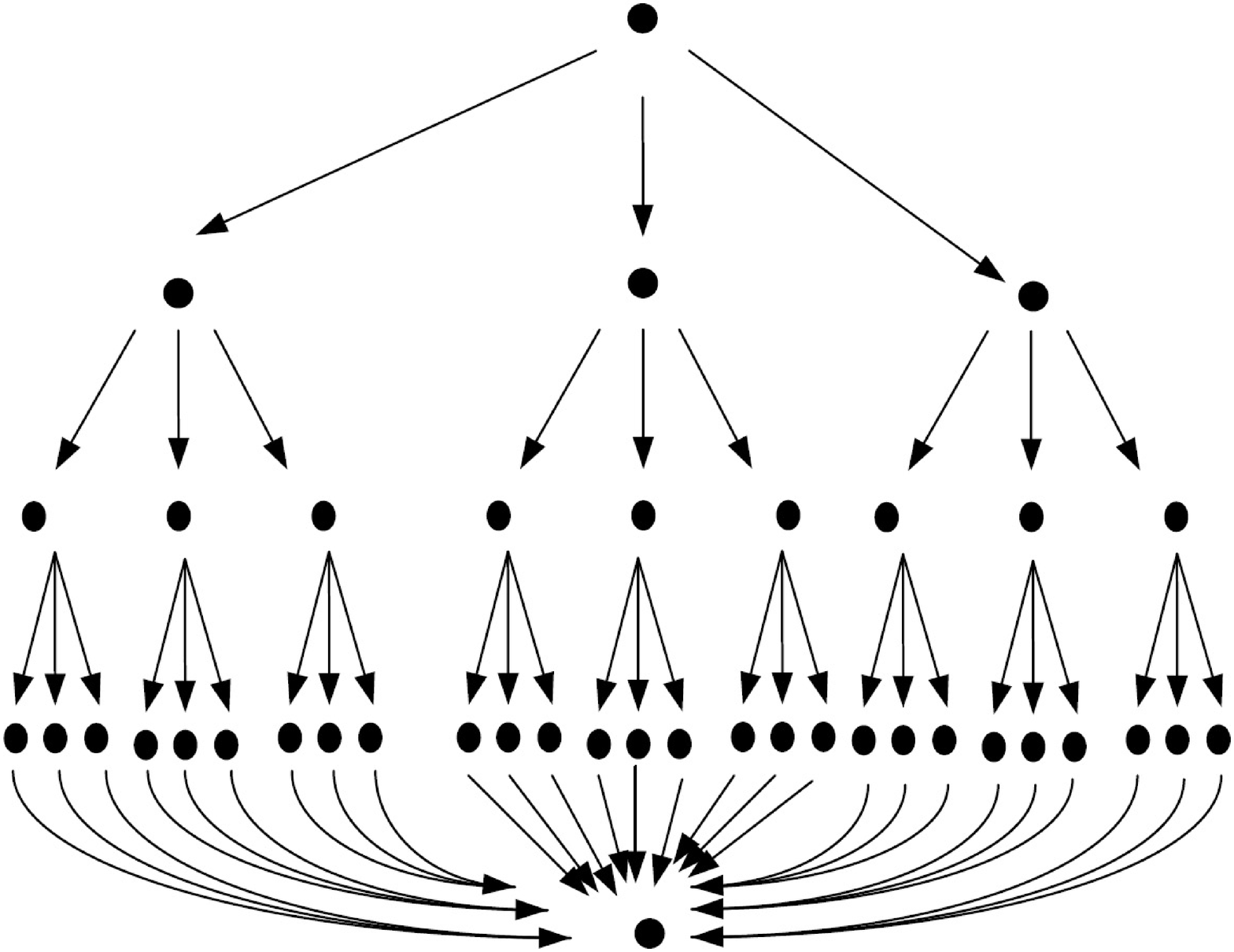}}
\subcaption{}
\label{CG}
\end{minipage}
\begin{minipage}[b]{0.3\textwidth}
\centering
\hspace{+1.2cm}\scalebox{.9}{\includegraphics[width=\textwidth]{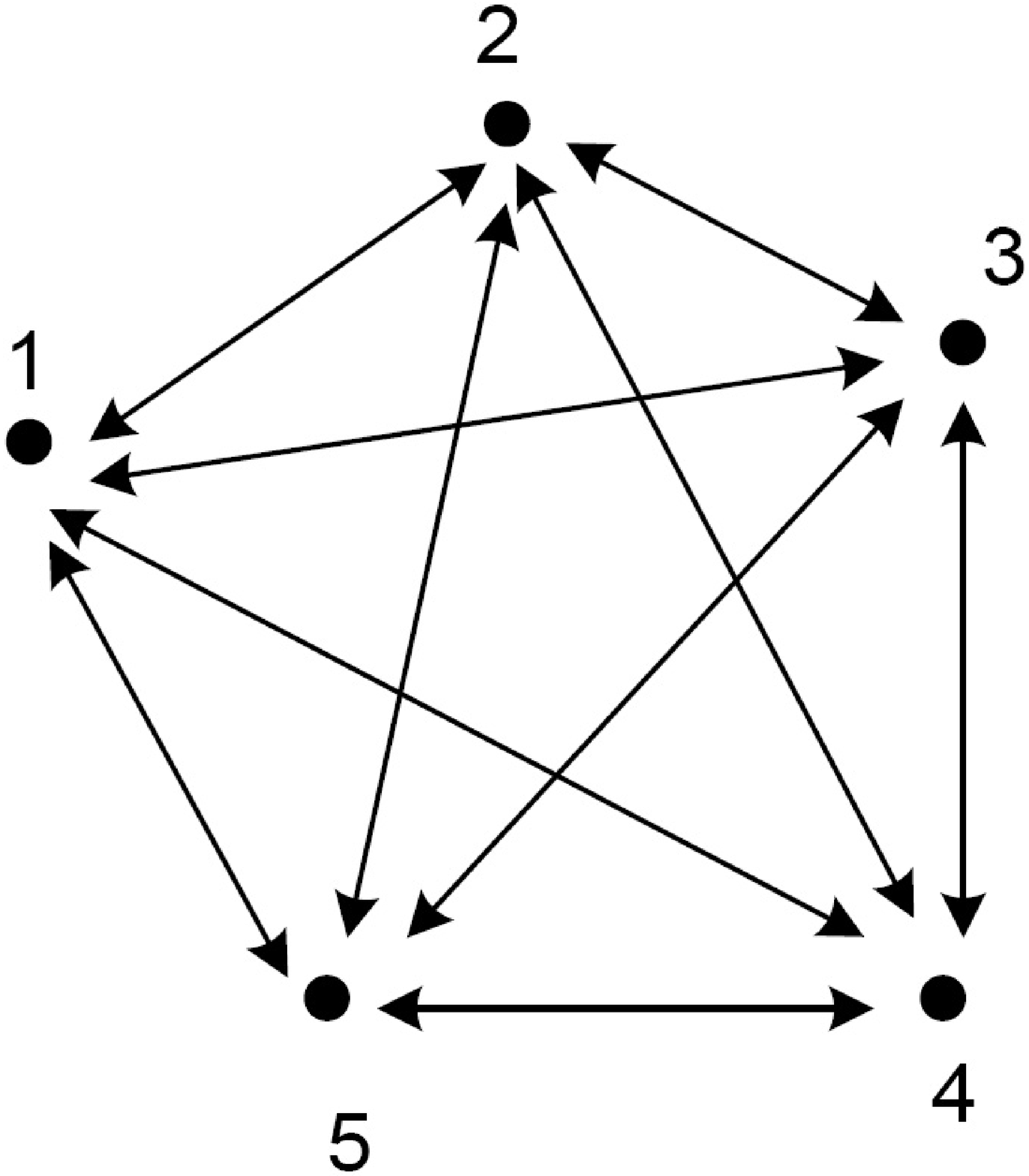}}
\subcaption{}
\label{com}
\end{minipage}
\begin{minipage}[b]{0.3\textwidth}
\centering
\hspace{+1.2cm}\scalebox{.9}{\includegraphics[width=\textwidth]{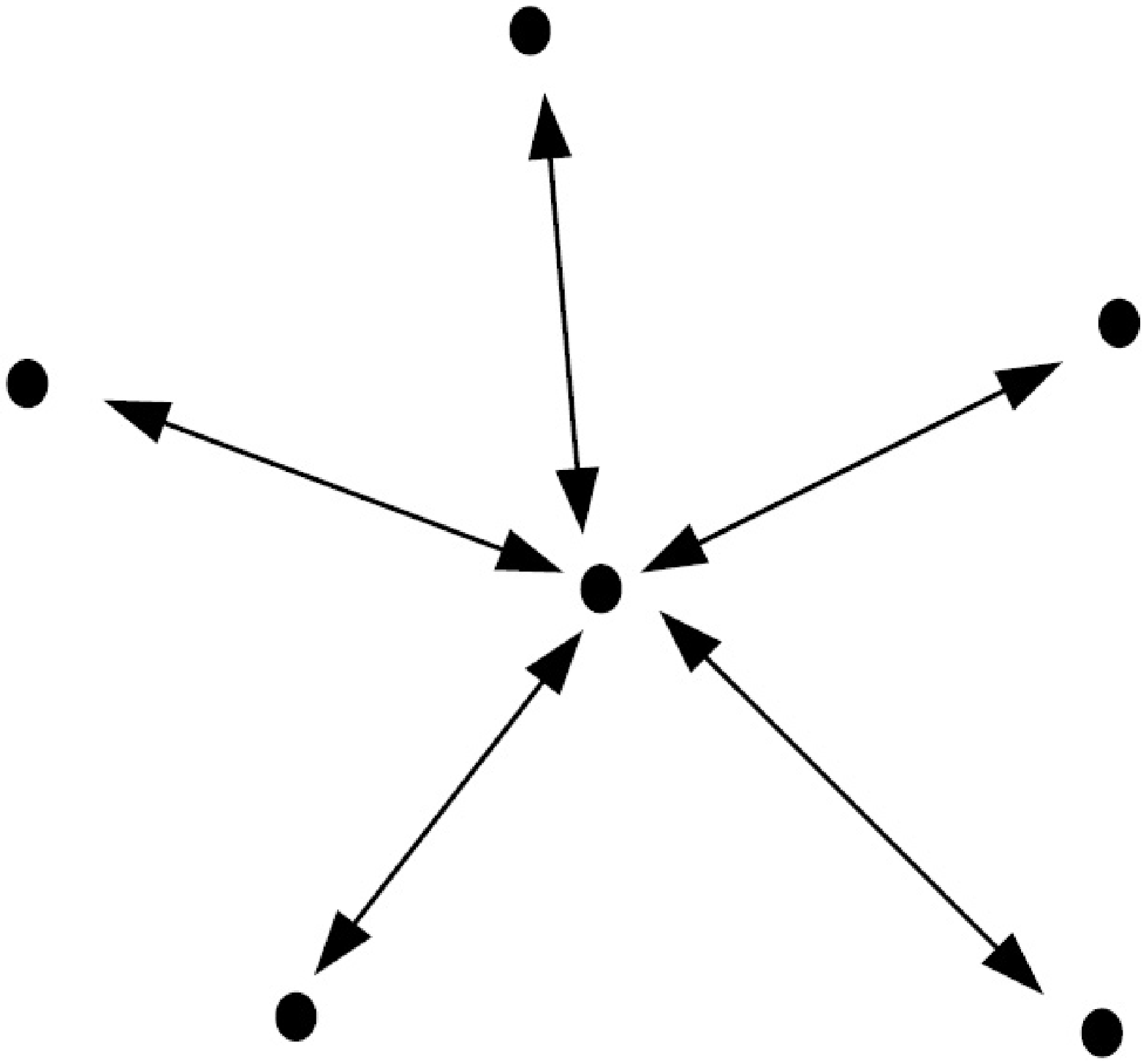}}
\subcaption{}
\label{cgg}
\end{minipage}

\caption{\label{csomm} Three different communication topologies: (a) the communication graph with $41$ nodes, (b) agents interact on a fully interconnected graph and the information from one agent reach other agents after a delay chosen randomly from $\{1,2\}$ with the same probabilities, (c) star-shaped communication topology with random delay chosen from $\{1,2\}$.}
\end{figure}

As can be seen in Fig.\ref{act-max}, data incest makes agents' actions in the constrained social learning without data incest removal different from the same in the  idealized framework. Also Fig.\ref{act-max} corroborates the excellent performance of data incest removal Algorithm~1.  As illustrated in Fig.\ref{act-max}, the actions of agents in social learning with data incest removal algorithm are exactly similar to those of the idealized framework without data incest.  The social learning problem over the graph shown in Fig.\ref{CG} is simulated 100 times to investigate the difference between  the estimated state of nature with the true one ($x=10$). The estimates  of state of nature (obtained in three different scenarios discussed in the beginning of the section) are depicted in Fig.\ref{mean-max}. As can be seen from the figure, the estimates obtained with data incest removal algorithm are very close to data incest free estimates of Scenario (iii). The bias in estimates in presence of data incest is also clear in this figure.
\begin{figure}[htb]
\begin{subfigure}[h]{0.5\textwidth}
\centering
\hspace{-1.5cm}\scalebox{.6}{\includegraphics{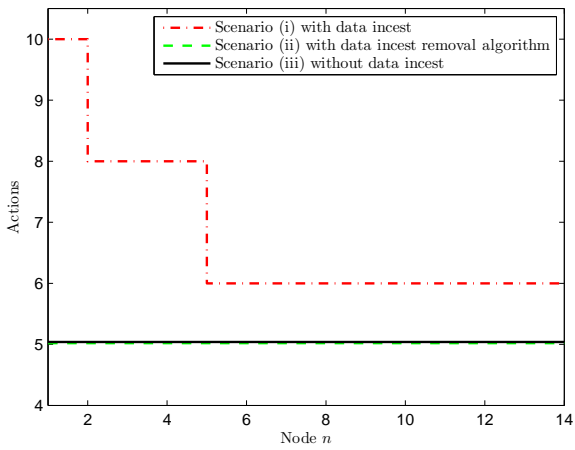}}
\caption{}
\label{act-max}
\end{subfigure}
\begin{subfigure}[h]{0.5\textwidth}
\centering
\hspace{+3cm}\scalebox{.6}{\includegraphics{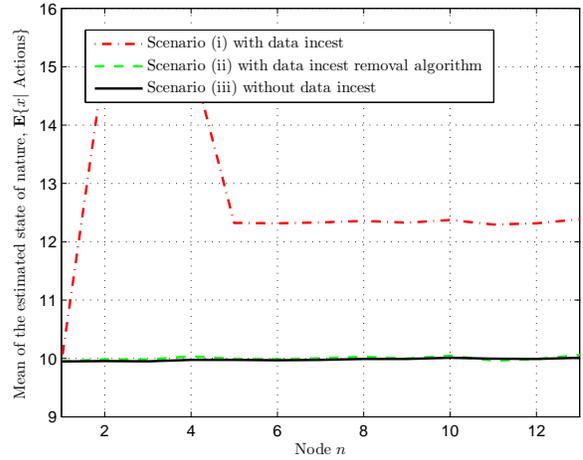}}
\caption{}
\label{mean-max}
\end{subfigure}
\caption{(a) Actions of agents obtained with social learning over social networks in three different scenarios (constrained social learning without data incest removal, constrained social learning with data incest removal algorithm, and idealized data incest free social learning) with communication graph depicted in Fig.\ref{CG}, (b) mean of the estimated state of nature in the state estimation problem with the same communication graph.}
\end{figure}

In the next simulation, a different communication topology is considered. We repeat the simulation for a star-shaped communication graph comprising  of six agents ($S=6$) at four time instants, $K=4$, so the total number of nodes in the communication graph is $24$, see Fig.\ref{cgg}. The communication delay is randomly chosen from $\{1,2\}$ with the same probabilities. We simulated the social learning in three different scenarios discussed above, to investigate the effect of data incest on the actions and the estimates of agents in the star-shaped social network. The actions chosen by nodes are depicted in  Fig.\ref{act-star}.  As can be seen from Fig.\ref{act-star}, using the data incest removal algorithm,  the agents' actions in the constrained social learning with Protocol~1 are very close to those of the idealized social learning with Protocol~2 which are free of data incest. Also the estimates of state of nature are very close to the true value of state of nature compared to the constrained social learning without data incest removal algorithm. Also note that the effect of data incest, as expected, in this communication topology is different for each agent; the agent who communicates with all other nodes is affected more by data incest. This fact is verified in Fig.\ref{starr}.
\begin{figure}[htb]
\begin{subfigure}[htb]{0.5\textwidth}
\centering
\hspace{-1.5cm}\scalebox{.6}{\includegraphics{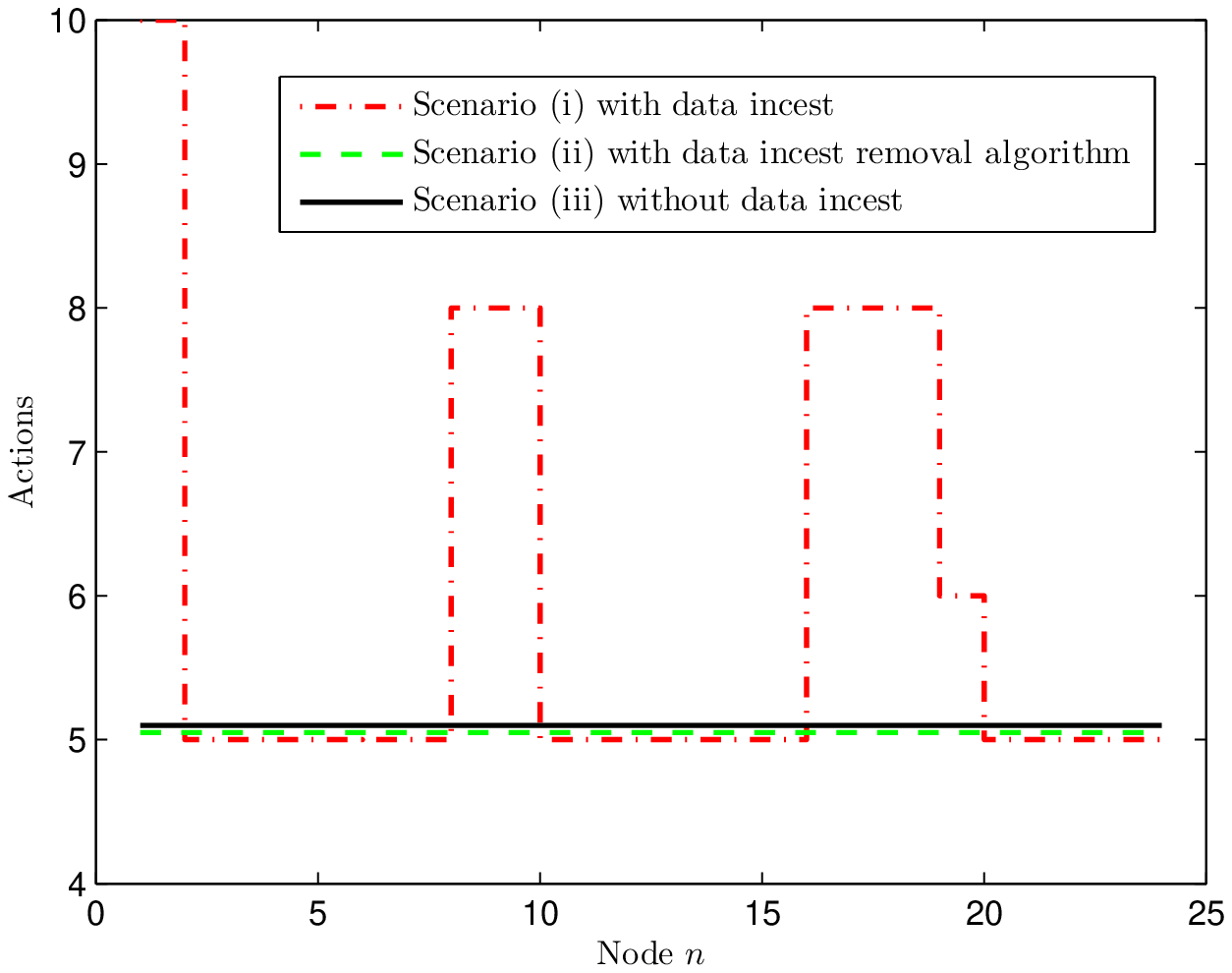}}
\caption{}
\label{act-star}
\end{subfigure}
\begin{subfigure}[htb]{0.5\textwidth}
\centering
\hspace{+3cm}\scalebox{.6}{\includegraphics{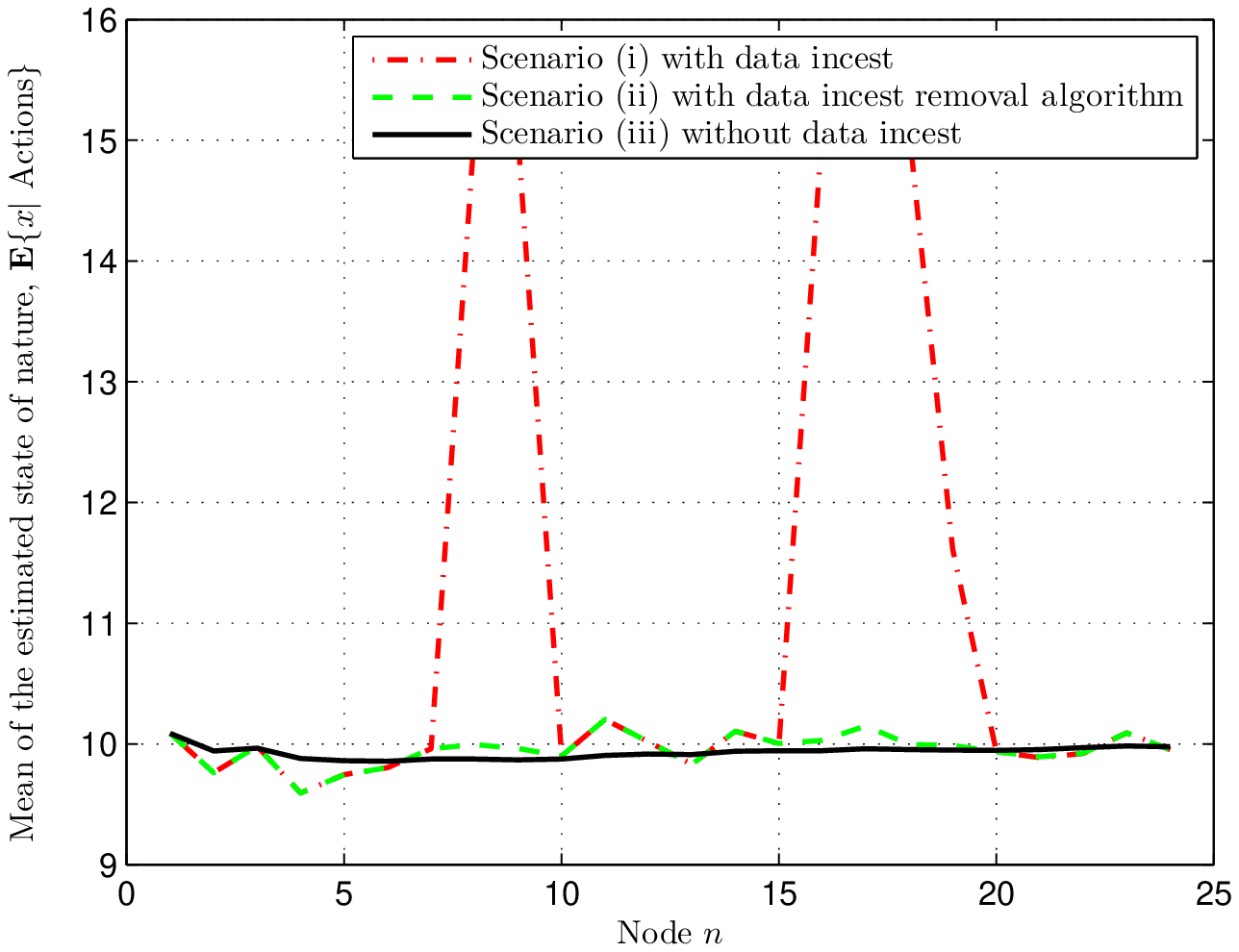}}
\caption{}
\label{mean-star}
\end{subfigure}
\caption{\label{starr}(a) Actions of agents obtained with social learning over social networks in three different scenarios (constrained social learning without data incest removal, constrained social learning with data incest removal algorithm, and idealized data incest free social learning) with communication graph depicted in Fig.\ref{cgg},(b) mean of the estimated state of nature in the state estimation problem with the same communication graph.}
\end{figure}

In the third example, a complete fully interconnected graph (where agents communicate with all other agents) is considered. In this example, action of each agent becomes available at all other agents after a random delay chosen from $\{1,2\}$ with the same probabilities. The agents' actions are shown in Fig.\ref{act-com}. Similar to the star-shaped graph, using data incest removal Algorithm~1 makes the agents' actions in the constrained social learning very similar to those of the idealized (data incest free) framework. Also, the excellent performance of  data incest removal Algorithm~1 in the estimation problem is depicted in Fig.\ref{mean-com}.
\begin{figure}[htb]
\begin{subfigure}[htb]{0.5\textwidth}
\centering
\hspace{-1.5cm}\scalebox{.6}{\includegraphics{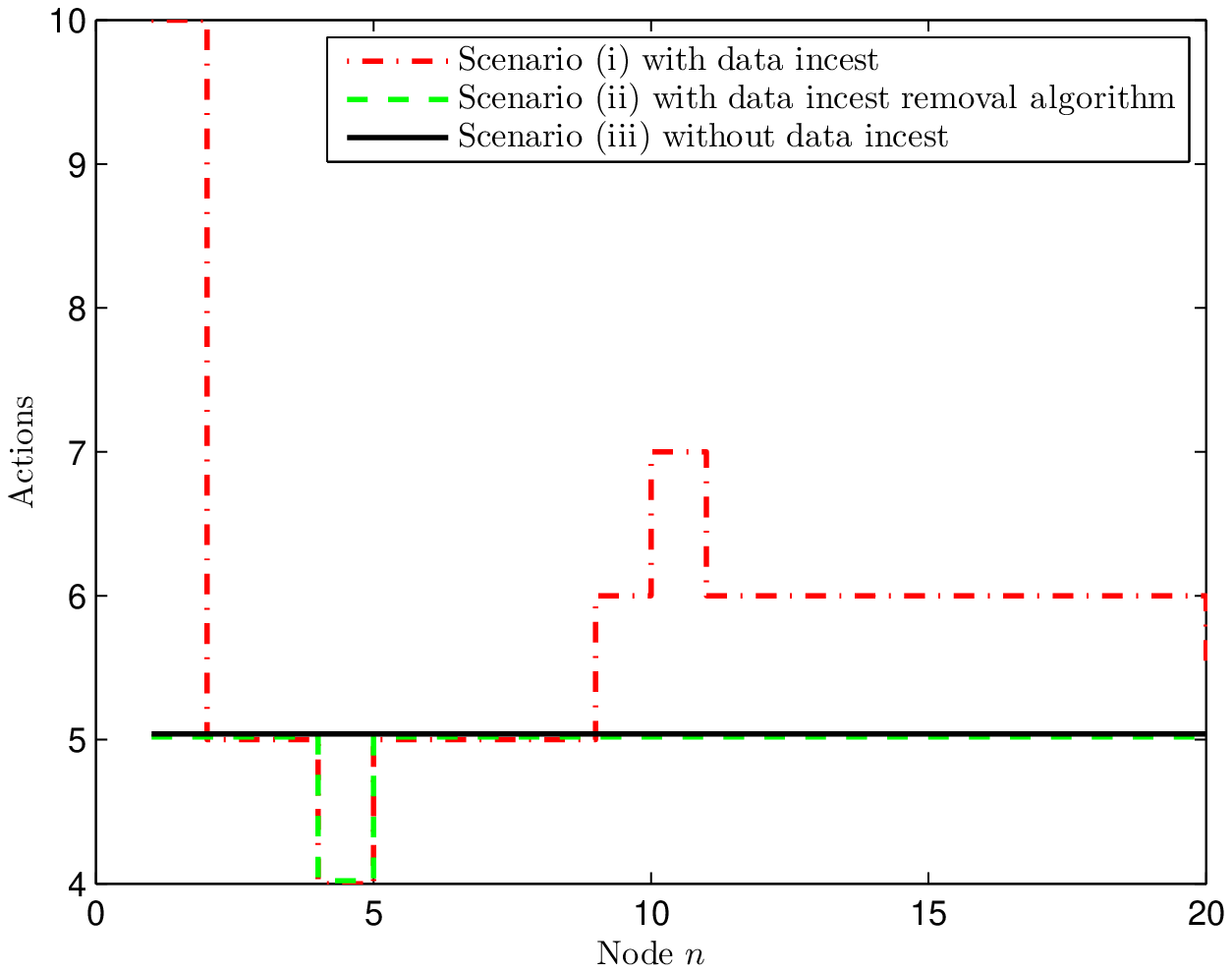}}
\caption{}
\label{act-com}
\end{subfigure}
\begin{subfigure}[htb]{0.5\textwidth}
\centering
\hspace{+3cm}\scalebox{.6}{\includegraphics{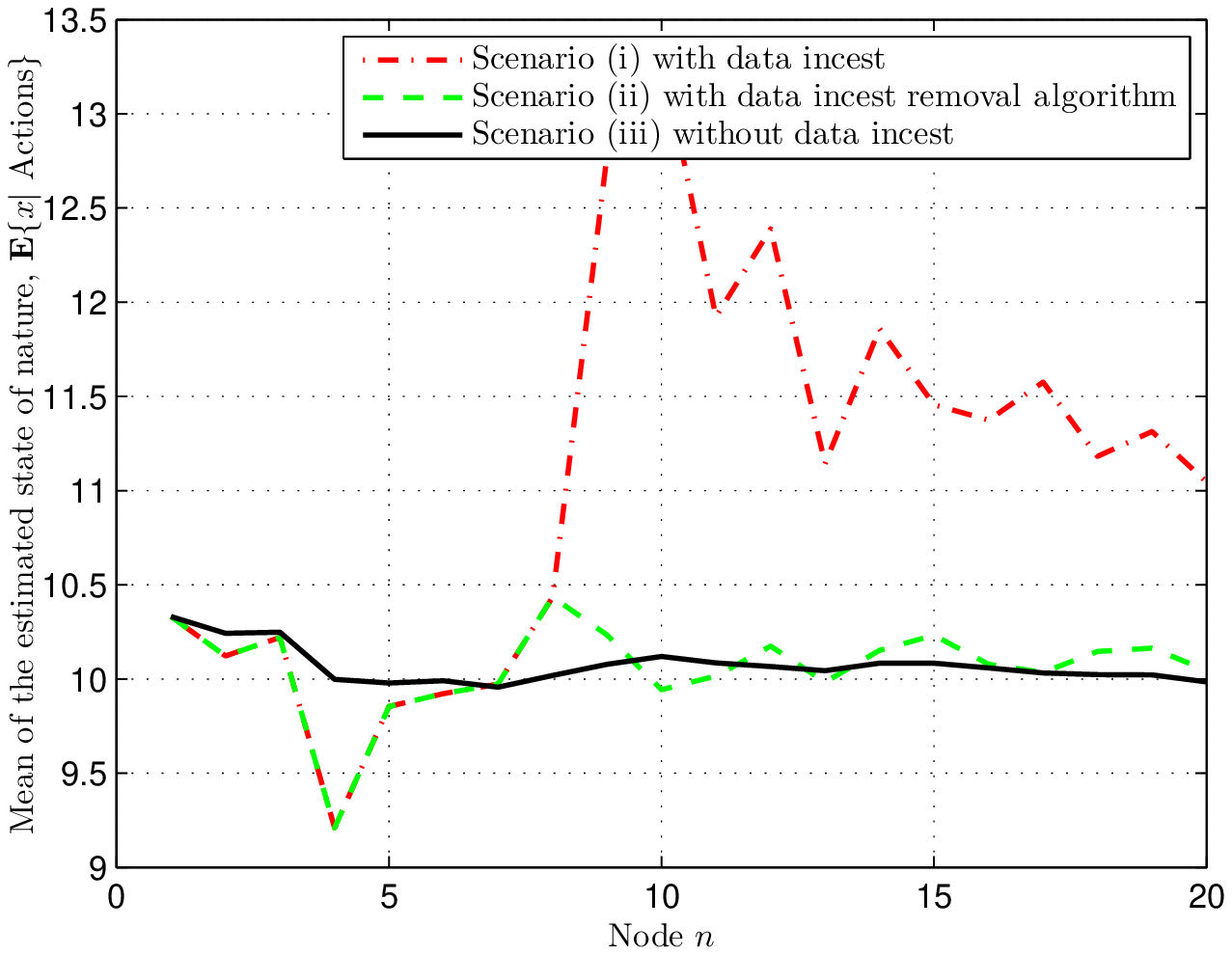}}
\caption{}
\label{mean-com}
\end{subfigure}
\caption{(a) Actions of agents obtained with social learning over social networks in three different scenarios (constrained social learning without data incest removal, constrained social learning with data incest removal algorithm, and idealized data incest free social learning) with communication graph depicted in Fig.\ref{com}, (b) mean of the estimated state of nature in the state estimation problem with the same communication graph.}
\end{figure}

We also extend our numerical studies to an arbitrary random network with five agents, $S=5, K=4$. We consider a fully connected  network  and assume that the interaction between two arbitrary agents (say agent $i$ and agent $j$) at time $k$ has four (equiprobable) possible statuses: (i) connected with delay 1, (ii) connected with delay 2, (iii) connected with delay 3, and (iv) not connected. If the link is connected with delay $\tau$, this means that the information from agent (i) at time $k$ becomes available at agent $j$ at time $k+\tau$. If the link is not connected, the information of agent $i$ at time $k$ never reaches agent $j$. We verify that the underlying communication graph, $G_n$, satisfies the topological Constraint~1 with simulation. Fig.\ref{act-rnd} depicts the agents' actions  in three different scenarios (with data incest, without data incest, and with data incest removal algorithm). The simulation results show that, even in this case with arbitrary network (that satisfies topological constraint), the actions obtained by the constrained social learning with data incest removal algorithm is very close to those in the idealized social learning.  As expected, using the data incest removal algorithm, the data incest associated with the estimates of agents can be mitigated completely, as shown in Fig.\ref{mean-rnd}.
\begin{figure}[!h]
\begin{subfigure}[b]{0.5\textwidth}
\centering
\hspace{-1.5cm}\scalebox{.6}{\includegraphics{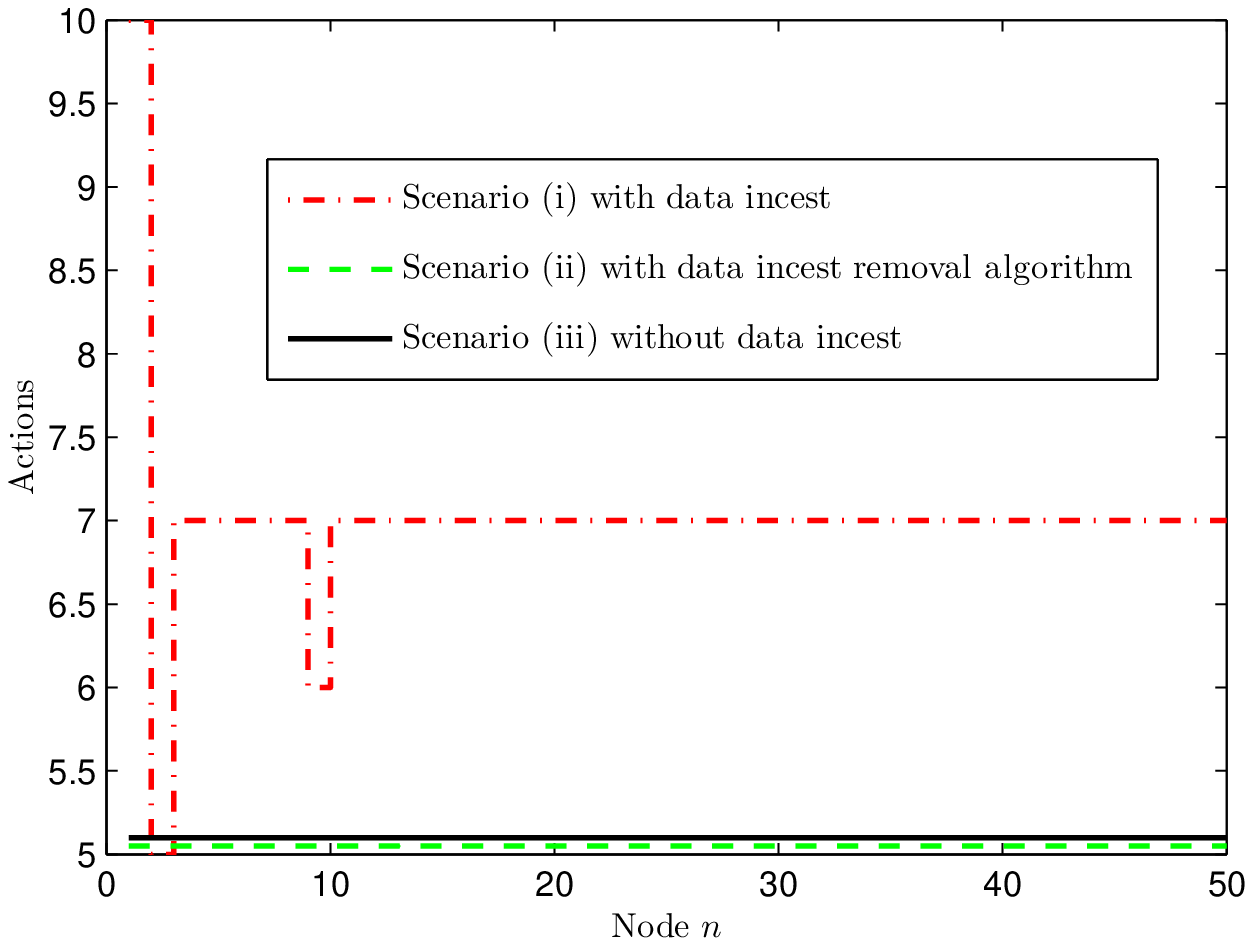}}
\caption{}
\label{act-rnd}
\end{subfigure}
\begin{subfigure}[b]{0.5\textwidth}
\centering
\hspace{+3cm}\scalebox{.6}{\includegraphics{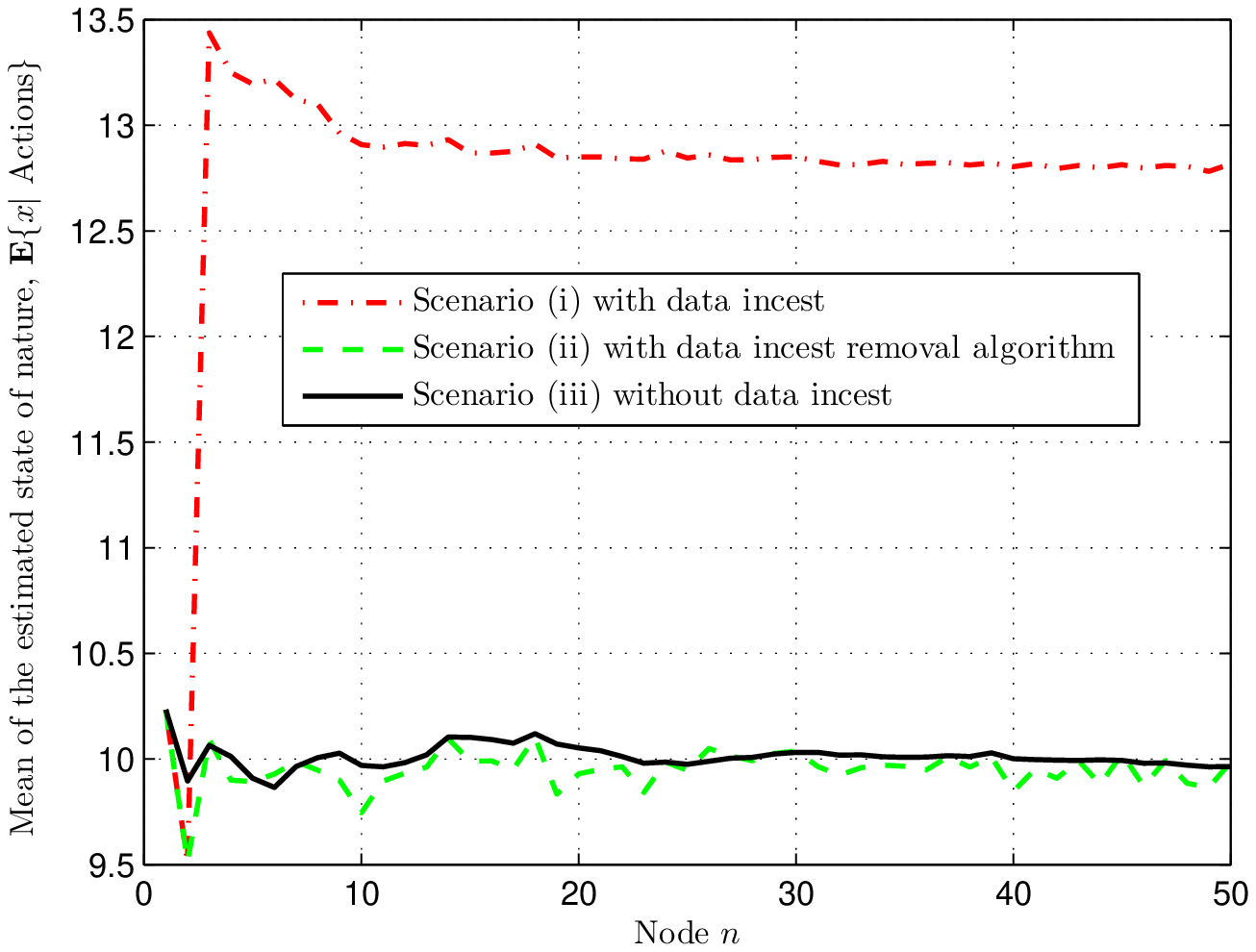}}
\caption{}
\label{mean-rnd}
\end{subfigure}
\caption{(a) Actions of agents obtained with social learning over social networks in three different scenarios (constrained social learning without data incest removal, constrained social learning with data incest removal algorithm, and idealized data incest free social learning) with arbitraty communication graph, (b) mean of the estimated state of nature in the state estimation problem with the same communication graph.}
\end{figure}

Here, we discuss the accuracy of the state estimates in terms of mean squared error with numerical studies. The mean squared error of estimates obtained in social learning with three different scenarios discussed in the beginning of this section (with data incest, with data incest removal algorithm, and the idealized framework) is computed for each of four communication graphs considered in our numerical studies. The results are depicted in Fig.\ref{mse}. As can be seen from Fig.\ref{mse}, the estimates of the constrained social learning with data incest removal Algorithm~1 are more accurate than the those of the constrained social learning in presence of data incest. However, as a result of herding, in star shaped and random communication topologies, the mean squared error of estimates is slightly (compared to the scenario without data incest removal algorithm) more than the idealized framework at each time.
\begin{figure}[!h]
\begin{minipage}{\textwidth}
\begin{minipage}[b]{0.5\textwidth}
\centering
\hspace{-1.5cm}\scalebox{.6}{\includegraphics{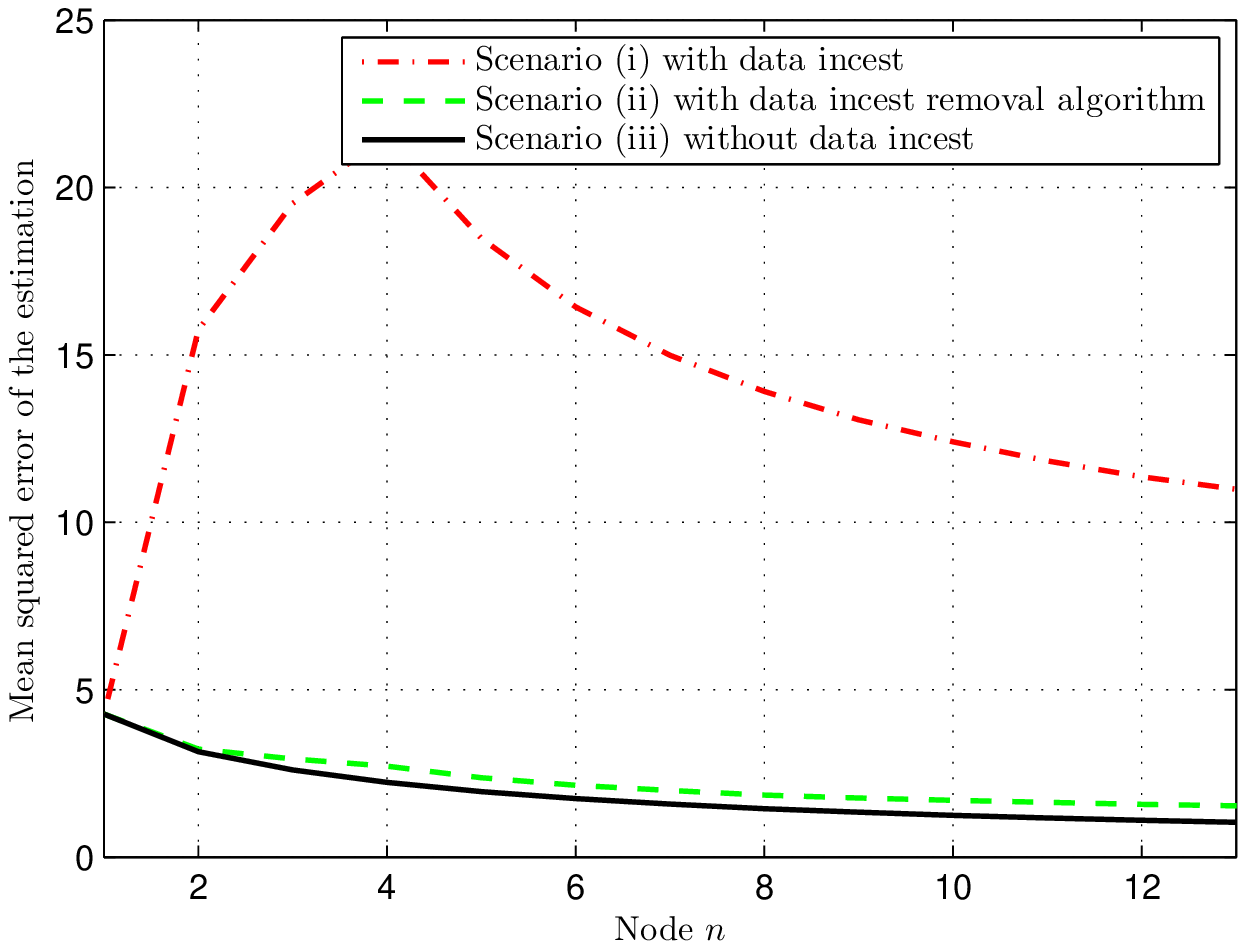}}
\subcaption{}
\label{mse-max}
\end{minipage}
\begin{minipage}[b]{0.5\textwidth}
\centering
\hspace{+3cm}\scalebox{.6}{\includegraphics{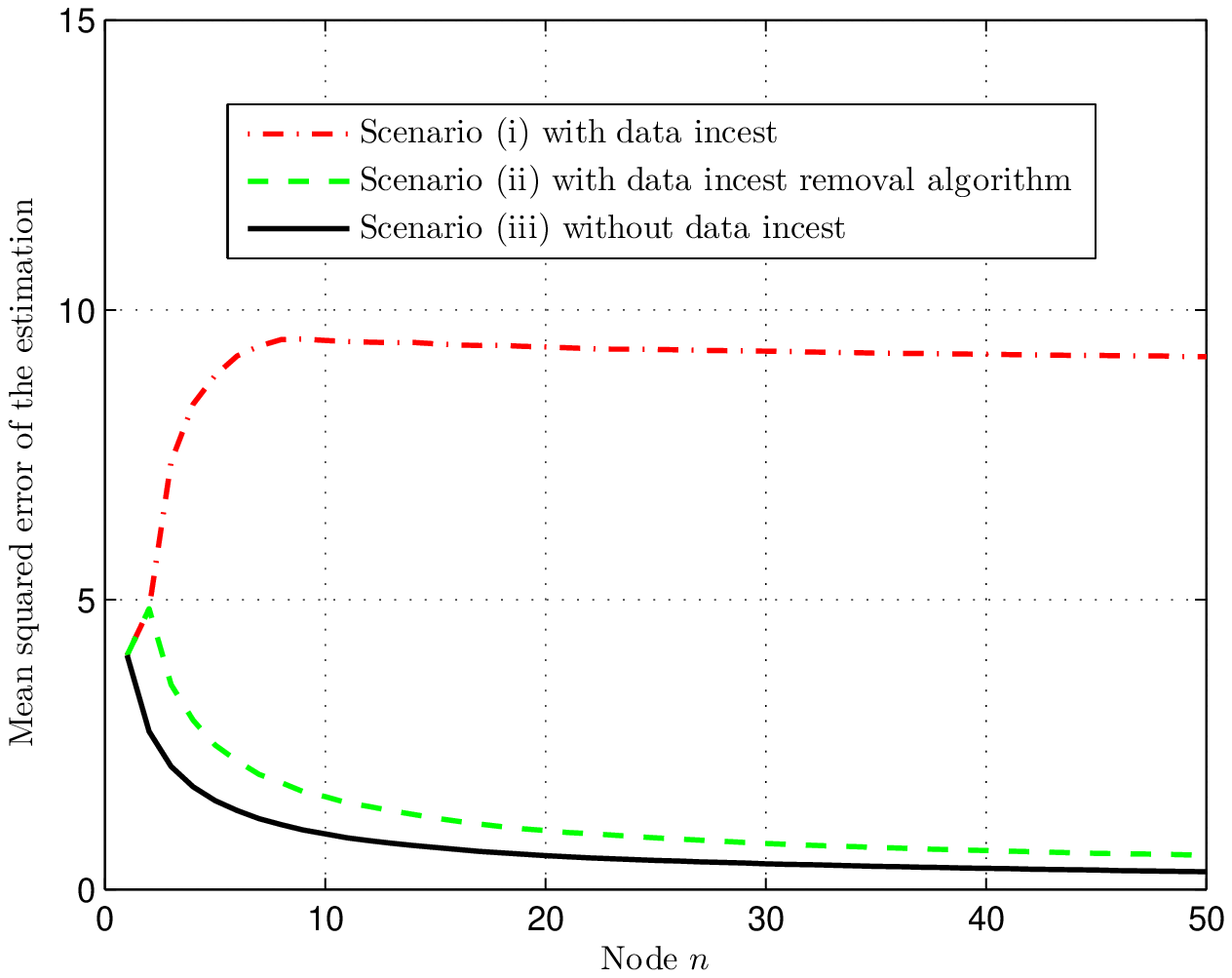}}
\subcaption{}
\label{mse-rnd}
\end{minipage}
\end{minipage}
\vspace{2mm}
\begin{minipage}{\textwidth}
\begin{minipage}[b]{0.5\textwidth}
\centering
\hspace{-1.5cm}\scalebox{.6}{\includegraphics{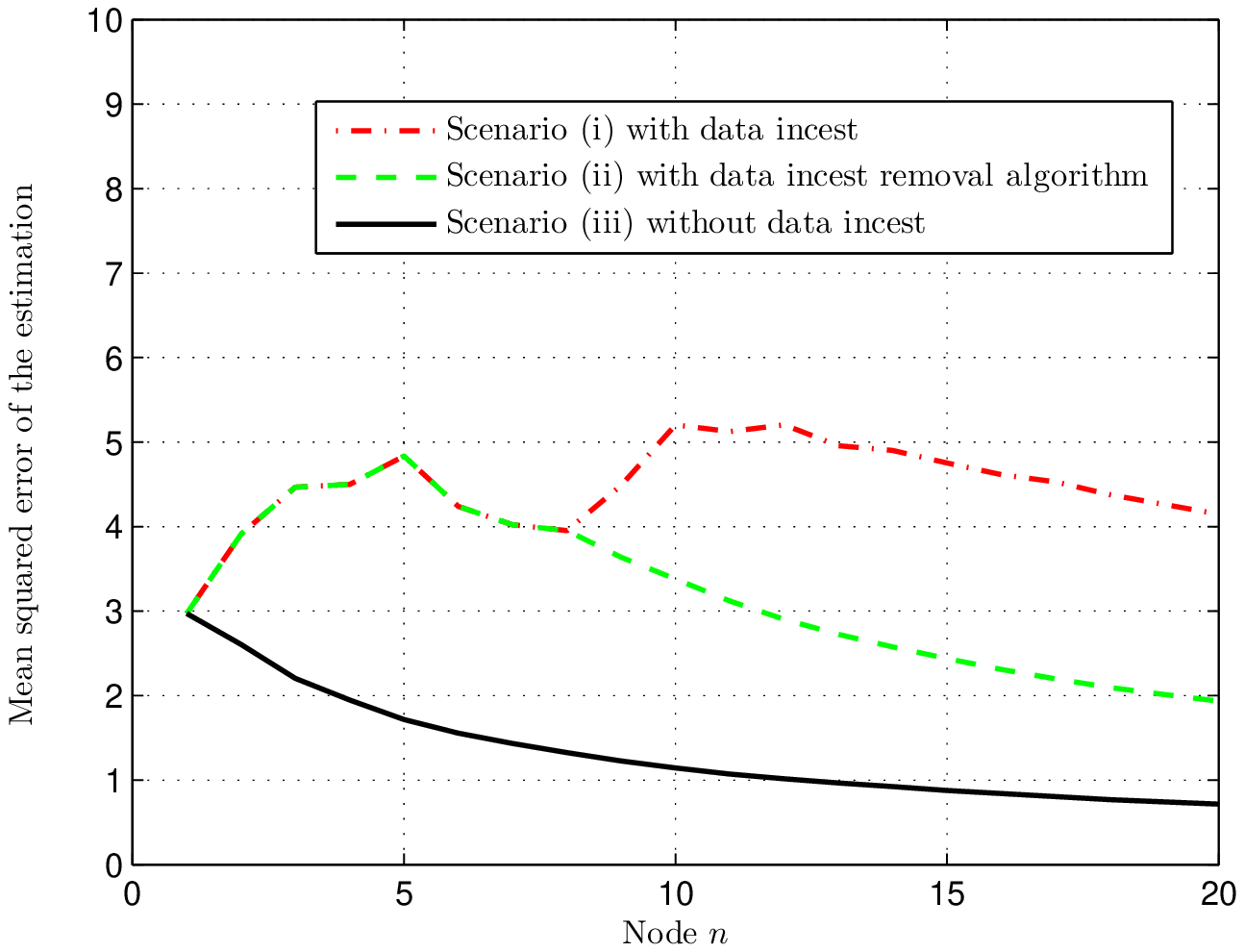}}
\subcaption{}
\label{mse-com}
\end{minipage}
\begin{minipage}[b]{0.5\textwidth}
\centering
\hspace{+3cm}\scalebox{.6}{\includegraphics{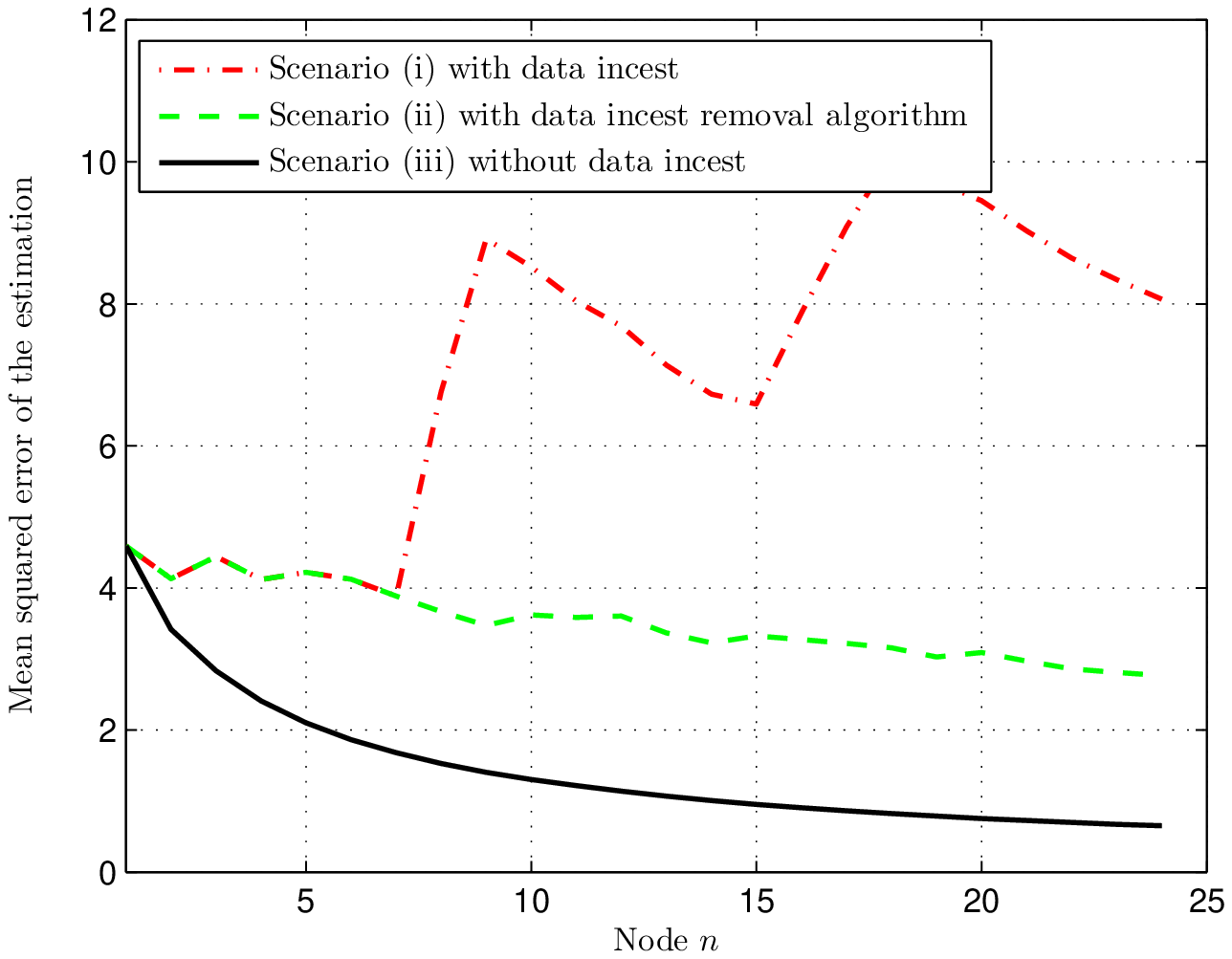}}
\subcaption{}
\label{mse-star}
\end{minipage}
\end{minipage}
\caption{\label{mse} Mean squared error of estimates (of state of nature) obtained with social learning with (a) communication graph depicted in Fig.\ref{CG}, (b) complete fully interconnected graph depicted in Fig.\ref{com}, (c) arbitrary communication graph, and (d) star-shaped communication graph depicted in Fig.\ref{cgg}.}
\end{figure}
\section{Conclusion}\label{sec:con}
In this paper, the state estimation problem in social networks with social learning is investigated. The state of nature could be geographical coordinates of an event (target localization problem) or quality of a social unit (online reputation system).  As discussed in the paper, data incest arises in this setup  as a result of the recursive nature of Bayesian estimation and random communication delays in social networks. We proposed a data incest removal algorithm for the multi-agent social learning in social networks in this paper along with a topological necessary and sufficient condition for data incest free estimation. The main difference of this work with the existing data incest removal algorithms in the literature is that in this paper we consider the data incest removal algorithm in social learning context where only public belief of agents (which can be computed directly from actions) is transmitted over the network while in existing data incest removal algorithms (in sensor networks or social networks) the private belief of agents which depends on their private observations are transmitted through the network. In the future work, we will consider the application of the results of this paper in the problem of inference in the graphical models using message passing algorithms.
\appendix
\subsection{Some Graph Theoretic Definitions}\label{subsec:graphtheory}

\textbf{Graph, Directed Graph, Path and Directed Acyclic Graph (DAG):}\begin{itemize}
\item A graph $G_N$ comprising of $N$ nodes is a pair $(V,E)$, where $V=\{v_1,\ldots,v_N\}$ is a set of nodes (also called vertices), and $E \subset V \times V $ is a set of edges between the nodes. \item Graph $G_N$ is an undirected graph if for any $(v_i,v_j)\in E$ then $(v_j,v_i)\in E $ and a graph is said to be directed if $(v_j,v_i)\in E $ is not a consequence of $(v_i,v_j)\in E$. \item A \textit{path} is an alternating sequence of nodes and edges, beginning and ending with an edge, in which each node is incident to the two edges that precede and follow it in the sequence.\item A \textit{Directed Acyclic  Graph (DAG)} is a directed graph with no path that starts and ends at the same node.
    \item A family of DAGs $\mathcal{G}_N$ is defined as a set of DAGs $\{G_1,\ldots,G_N\}$ where $G_{n}$ is the sub-graph of
$G_{n+1}$ such that for $ n = 1,\ldots,N-1$
\begin{equation}\label{eq:familyofDAGs}
\left\{\begin{array}{l}
V_n= V_{n+1}/v_{n+1}\;,\\
E_n = E_{n+1}/\{(v_i,v_{n+1})\in E_{n+1}| v_i \in V_{n+1}\}\;.
\end{array}\right.
\end{equation}\end{itemize}

\textbf{Adjacency and Transitive Closure matrices:}

Let $G_N =(V ,E )$ denote a graph with $N$ nodes $V =\{v_1,\ldots,v_N\}$.\begin{itemize}
\item The Adjacency Matrix $\bar A $ of $G_N $ is an $N\times N$ matrix whose elements $\bar A(i,j)$ are given by \begin{eqnarray}\label{eq:adjacencymatrix}
\bar A (i,j)=\left\{
\begin{array}{ll}
1 &\textrm{ if } (v_j,v_i)\in E \;, \\
0 &\textrm{ otherwise}
\end{array} \right.\;. \text{  } A(i,i)=0.
\end{eqnarray}
\item The Transitive Closure Matrix ${{T}}$ of $G_N$ is an $N\times N$ matrix whose elements ${{T}}(i,j)$ are given by  $T(i,i)=1$, and
\begin{eqnarray}\label{eq:transitivieclosurematrix}
{{T}} (i,j)=\left\{
\begin{array}{ll}
1 &\textrm{ if there is a path between } v_j \textrm{ and } v_i\;, \\
0 &\textrm{ otherwise}
\end{array} \right.\;.
\end{eqnarray}
\end{itemize}

The following shows the special form of the adjacency matrix of the directed acyclic graph and provides a closed form expression to compute the transitive closure matrix from the adjacency matrix of a directed acyclic graph.
\begin{Lemma} \label{lem:am2tcm}
\it
A sufficient condition for a graph $G_N$ to be a DAG is that  the Adjacency matrix $A$ is an upper triangular matrix. For a DAG $G_N$, the Transitive Closure Matrix  ${{T}}$ is related to the Adjacency matrix by
\begin{eqnarray}\label{eq:am2tcm}
{{T}} = Q(\{{\mathbf{I}}_N-A\}^{-1}).
\end{eqnarray}
\end{Lemma}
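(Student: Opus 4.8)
The plan is to establish the two assertions of Lemma~\ref{lem:am2tcm} in turn. For the first, I would argue directly from the definition of the adjacency matrix in (\ref{eq:adjacencymatrix}). If $A$ is upper triangular then $A(i,j)\neq 0$ forces $i\le j$, and since $A(i,i)=0$ in fact $i<j$; because $A(i,j)=1$ means $(v_j,v_i)\in E$, every directed edge therefore points from a vertex of larger index to a vertex of strictly smaller index. Consequently, along any directed path the vertex index strictly decreases at each step, so no path can return to its initial vertex, and hence $G_N$ is a DAG by definition. (The same observation shows $A$ is nilpotent, which is the fact I will reuse below.)

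For the second assertion, I would first note that a DAG has no directed walk of length $\ge N$: such a walk would list more than $N$ vertices counted with multiplicity, hence repeat a vertex by the pigeonhole principle, and the portion of the walk between two visits to that vertex would be a directed cycle, which is impossible. Using the standard fact that $\big(A^k\big)(i,j)$ equals the number of directed walks of length exactly $k$ from $v_j$ to $v_i$ (with $A^0=\mathbf{I}_N$ counting the trivial length-zero walk), this yields $A^m=0$ for all $m\ge N$. Hence the Neumann series terminates, $\mathbf{I}_N-A$ is invertible, and
\begin{equation}
(\mathbf{I}_N - A)^{-1} \;=\; \sum_{k=0}^{N-1} A^k ,
\end{equation}
a matrix with nonnegative integer entries.

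It then remains to compare this matrix with the transitive closure matrix defined in (\ref{eq:transitivieclosurematrix}). Since $G_N$ is acyclic, no directed walk repeats a vertex, so directed walks and directed paths coincide; therefore the $(i,j)$ entry of $\sum_{k=0}^{N-1}A^k$ is strictly positive exactly when $i=j$ (from the $k=0$ term) or there is a directed path of positive length from $v_j$ to $v_i$, which is precisely the condition $T(i,j)=1$. Applying the operator $Q(\cdot)$ that replaces each nonzero entry of a matrix by $1$ then gives $Q\big((\mathbf{I}_N-A)^{-1}\big)=T$, as claimed.

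The computations here are routine; the two places that warrant care are the justification that the Neumann series is legitimate -- which rests on the nilpotency of $A$ coming from acyclicity, not on any spectral-radius estimate -- and the identification of directed walks with directed paths in a DAG, which is exactly what makes the strict positivity of $\big(\sum_k A^k\big)(i,j)$ equivalent to reachability of $v_i$ from $v_j$, so that the binarization $Q$ recovers the transitive closure matrix $T$ without distortion.
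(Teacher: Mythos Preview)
Your proposal is correct and follows essentially the same approach as the paper: both rest on the classical interpretation of $(\mathbf{I}_N-A)^{-1}$ as counting directed walks/paths in the graph, then binarizing via $Q$ to recover reachability. The paper's own proof is a one-sentence appeal to this interpretation, whereas you supply the details the paper omits---the explicit argument that strict upper-triangularity of $A$ forces indices to strictly decrease along any directed path (hence acyclicity), the nilpotency of $A$ that makes the Neumann series a finite sum, and the observation that in a DAG every directed walk is already a simple path so that positivity of $\big(\sum_{k\ge0}A^k\big)(i,j)$ is exactly equivalent to $T(i,j)=1$.
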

Here, ${\mathbf{I}}_N$ is the $N\times N$ identity matrix, and  $Q$ denote the matrix valued "quantization"  function so that for any $N\times N$-matrix $B$, $Q(B)$ is the $N\times N$ matrix with elements
\begin{eqnarray}\label{eq:quantizationfunction}
Q(B)(i,j) = \begin{cases} 0 & \text{ if } B(i,j)=0\;, \\
1  & \text{ if } B(i,j) \neq 0 \end{cases}.
\end{eqnarray}

{\it Proof:} This result is derived from the classical interpretation of matrix $\{{\mathbf{I}}_N-A\}^{-1}$. The entry in row $i$ and column $j$ of this matrix gives the number of paths from node $i$ to node $j$. \hfill $\square$\\

 To deal with information flow in a social network, we now introduce the concept
of a {\em family} of DAGs.

\textbf{Remark 6:}
For the sake of simplicity in notations, let us define two vector representatives of adjacency and transitive closure matrices of directed acyclic graph. For each graph $G_n \in \mathcal{G}_N$, let the $n\times n$ matrices $\bar A_n$ and $T_n$, respectively, denote the adjacency matrix and transitive closure matrix. Define the following:
\begin{eqnarray}\label{deft}
\left\{\begin{array}{l}
\text{ $t_n \in \{0,1\}^{1 \times (n-1)}$: transpose of
first $n-1$  elements of $n$th column of $T_n$,}
\label{eq:tn}
\\
\text{ $b_n \in \{0,1\}^{1 \times (n-1)}$: transpose of first $n-1$ elements
of $n$th column of $\bar A_n$}.
\end{array}\right.
\end{eqnarray}

\textbf{Remark 7:}
As can be straightforwardly followed from the construction of adjacency and transitive closure matrices in (\ref{eq:familyofDAGs}), for a family of DAGs $\mathcal{G}_N=\{G_1,\ldots,G_N\}$, for any $n\in \{1,\ldots,N-1\}$, the adjacency matrix  $\bar A_n$  and transitive closure matrix ${T}_n$ of graph $G_n$ are respectively the $n\times n$ left upper matrices of the adjacency matrix $A_{n+1}$ and transitive closure matrix ${T}_{n+1}$ of graph $G_{n+1}$.
\subsection{Proof of Theorem \ref{Theorem:informationflowDAG} }\label{subsec:proofp1}
To prove that the graph $G_n = (V_n, E_n)$ from family $\mathcal{G}_n$ is a directed acyclic graph, we only need to show that the adjacency matrix of $G_n$ is an upper triangular matrix. Then from Lemma~\ref{lem:am2tcm}, the graph $G_n$ is a directed acyclic graph. Suppose that $v_i$ and $v_j$ are two vertices of $G_n$, that is $v_i, v_j \in V_n$. From re-indexing scheme (\ref{reindexing_scheme}), $v_i$ and $v_j$ represents agents $s_i$ and $s_j$ at time instants $k_i$ and $k_j$, respectively. We have $v_i = s_i +S(k_i-1)$ and $v_j = s_j + S(k_j - 1)$. Because of the information flow, information of each agent may become available at other agents at later time instants, a message cannot travel back in the time! This means that if $k_i < k_j$, there should not be an edge from $v_j$ to $v_i$, $(v_j,v_i) \notin E_n$. Using re-indexing scheme if $k_i < k_j$, then $v_i < v_j$ (because $k_i$ and $k_j$ are integers and $s_i, s_j \leq S$).
Therefore, we deduce that \begin{equation} 
  i<j   \Rightarrow (v_j,v_i)\notin E_n.
\end{equation}
Consequently, the Adjacency Matrix is a strictly upper triangular matrix so that $G_n$ is a DAG. Then it follows from the
construction of the DAGs that ${\mathcal G}_N$ is a family of DAGs.
\subsection{Proof of Lemma~\ref{lem:1}}\label{ap:lem1}
\begin{proof}
We assume that each node has the most up-to-date public belief of social learning, $\pi_{-n} = p(x|\Theta_n)$. This node records its own private observation $z_n = \bar z_l$. The private belief is
\begin{align}\label{eq:ap6}
\mu_n(m) =  p(x = \bar x_m|\Theta_n,z_n).
\end{align}
Using Bayes' theorem, (\ref{eq:ap6}) can be written as
\begin{align}\label{eq:ap6}
\mu_n(m) =   p(x = \bar x_m|\Theta_n,z_n) = cp(z_n|x)p(x|\Theta_n)\nonumber\\
=c \pi_{-n}(m)B_{ml}.
\end{align}
The normalizing factor $c$ is used to make $\mu_n$ a true probability mass function, that is $\sum_{m=1}^X \mu_n(m) = 1$. Expected cost given $\mu_n$ is equal to $C'_a\mu_n$ thus the action $a_n$ is $a_n = \argmin_{a\in \mathbb{A}}\{C_a'\mu_n\}$. To complete the proof, we need to compute the after-action public belief, $\pi_{+n} = p(x|\Theta_n, a_n)$. Applying Bayes' theorem, the after action public belief can be written as
\begin{align}\label{eq:ap9}
\pi_{+n}(m) &= p(x = \bar x_m|\Theta_n, a_n) = c p(a_n|\Theta_n,x) p(x = \bar x_m|\Theta_n) \nonumber\\&= c p(a_n|x,\pi_{-n})\pi_{-n}(m) = c\sum_{j =1}^{Z}{p(a_n|x, z = \bar z_j,\pi_{-n})}p( z = \bar z_j).
\end{align} Knowing observations and public belief, the private belief can be computed. From the private belief, the action $a_n$ is speified. Thus
\begin{align}\label{eq:ap7}
p(a_n|x, z = \bar z_j,\pi_{-n}) = \left\{\begin{array}{l}1 \quad \text{if } a_n = \argmin_{a\in \mathbb{A}}\{C_a'B_j\pi_{-n}\} \\ 0\quad \text{if } a_n \neq \argmin_{a\in \mathbb{A}}\{C_a'B_j\pi_{-n}\}  \end{array}\right.
\end{align}
where $B_j = {\rm diag}(B_1j,\ldots, B_{Xj})$. Using indicator function $\mathbb{I}(\cdot)$, Eq. (\ref{eq:ap7}) can be reorganized as
\begin{align}\label{eq:ap8}
p(a_n|x, z = \bar z_j,\pi_{-n}) = \prod_{\hat a \in \mathbf{A}-\{a_n\}}\mathbb{I}(C_{a_n}'B_{j}\pi_{-n} <C_{\hat a}'B_{j}\pi_{-n} )
\end{align}
Substituting (\ref{eq:ap8}) in (\ref{eq:ap9}) completes the proof as follows
\begin{align}
\pi_{+n}(m)= c \pi_{-n}(m)\sum_{j=1}^{Z}\left[\prod_{\hat a \in \mathbf{A}-\{a_n\}}\mathbb{I}(C_{a_n}'B_{j}\pi_{-n} <C_{\hat a}'B_{j}\pi_{-n} )\right]B_{mj}
\end{align}
\end{proof}
\subsection{Proof of Theorem \ref{prop:ideal}}\label{subsec:proofp2}
\begin{proof}The logarithm of the after-action public belief of learning problem (\ref{eq:setup2}) with benchmark information exchange Protocol~2, $\theta_n^{\rm full}$, is $\log\left(p(x|\Theta_n^{\rm full},G_n)\right)$. Recall that $\Theta_n^{\rm full}$ denotes the entire history of actions from previous nodes who have a path to node $n$ and $S_i$ denotes the set of all actions that $a_i$ depends on them. Also from definition of the transitive closure matrix (\ref{eq:transitivieclosurematrix}) and $t_n$ in (\ref{eq:tn}), the nodes who have a path to node $n$ are corrosponding to non-zero elements of $t_n$. Because if $t_n(i) = 1$, then there exists a path from node $i$ to node $n$. Therefore, the after-action public belief can be written as
\begin{align}\label{eq:ap1}
p(x|\Theta_n^{\rm full},a_n, G_n) =& c p(a_n|S_n,x) p (x | \{a_i ; a_i \in \Theta_{n}^{\rm full}\}) \nonumber\\=&  c\pi_0p(a_n|S_n,x) \prod_{a_i \in \Theta_n^{\rm full}} p(a_i| S_i,x).
\end{align}
Note that Bayes' theorem is used  recursively  to expand $ p (x | \{a_i ; a_i \in \Theta_{n}^{\rm full}\})$ and $S_i$ includes actions(from $\Theta_n^{\rm full}$) into account that $a_i$ depends on them. Taking the logarithm of both sides of (\ref{eq:ap1}) yields
\begin{align}
\theta_n^{\rm full}  =& \log\left(p(x|\Theta_n^{\rm full},a_n,G_n)\right)\nonumber\\
=& \log\left(c\pi_0 p(a_n|S_n,x) \prod_{a_i \in \Theta_n^{\rm full}} p(a_i| S_i,x)\right)\nonumber\\
=& \log\left(p(a_n|S_n,x)\right) + \sum_{t_n(i) \neq 0}\log\left(p(a_i| S_i,x)\right), \nonumber\\
=&  \sum_{i=1}^{n-1} t_n(i)\nu_i + \nu_n,
\end{align}
where $\nu_i$ denotes $\log\left(p(a_{i}|x,S_i)\right)$. Note that the normalizing constant $c$ and $\pi_0$ are omitted for the sake of simplicity as they are the same for both learning problems (\ref{eq:setup1}) with the constrained social learning Protocol~1 and (\ref{eq:setup2}) with the benchmark Protocol~2.
\end{proof}
\subsection{Proof of Theorem \ref{prop:mis}}\label{subsec:proofp3}
\begin{proof}
The aim here is to show that if $w_n = t_n\left(T_{n-1}'\right)^{-1}$ then $\hat \theta_n$ defined in (\ref{def:constraintestimate}) is exactly equal to $\theta^{\rm full}_n$ in (\ref{eq:deftheta}). Before proceeding, let us first rewrite (\ref{def:constraintestimate}) and (\ref{eq:deftheta}) using the following notations
\begin{align}\label{eq:new}
\theta_n^{\rm full} = \nu_n + (t_n \otimes {\mathbf{I}}_{d})\nu_{1:n-1},\nonumber\\
\hat \theta_n = \nu_n + (w_n \otimes {\mathbf{I}}_{d})\hat\theta_{1:n-1},
\end{align}
 where $\hat\theta_{1:n-1} \triangleq [\hat\theta_1',\ldots,\hat\theta_{n-1}']'$, $\nu_{1:n-1} \triangleq [\nu_1',\ldots,\nu_{n-1}']' \in \mathbb{R}^{(n-1)d \times 1} $. Here $\otimes$ denotes Kronecker (tensor) product and ${\mathbf{I}}_{d}$ denotes the $d\times d$ identity matrix.

 To prove Theorem~\ref{prop:mis}, we first start from \begin{equation}\label{eq:cri}\hat \theta_n = \theta^{\rm full}_n.\end{equation} Assume that (\ref{eq:cri}) holds for all $i$ where $1 \leq i \leq n$. From (\ref{def:constraintestimate}), $\theta^{\rm full}_n$ can be written as (given that Eq.~(\ref{eq:cri}) holds)
\begin{align}\label{eq:ap2}
\theta^{\rm full}_n = \hat\theta_n = (w_n \otimes \mathbf{I}_d)\hat\theta_{1:n-1} + \nu_n\nonumber\\
= (w_n \otimes \mathbf{I}_d)\theta^{\rm full}_{1:n-1} + \nu_n.
\end{align}
Eq. (\ref{eq:cri}) holds for all $i$ where $1 \leq i \leq n$. Therefore,  $\hat\theta_{1:n-1}=\theta^{\rm full}_{1:n-1}$. From (\ref{eq:deftheta}) in Theorem~\ref{prop:ideal},  $\theta_{1:n-1}^{\rm full}$ can be expressed as
\begin{equation}\label{eq:ap3}
\theta_{1:n-1}^{\rm full} = \left(T_{n-1}'\right)\nu_{1:n-1}.
\end{equation}
Note that in the derivation of (\ref{eq:ap3}), we use the definition of $t_{n-1}$ in (\ref{eq:tn}) as the first $n-2$ elements of $T_{n-1}$ and so on. Using (\ref{eq:ap3}), (\ref{eq:ap2}) can be written as
\begin{equation}\label{eq:ap4}
\theta^{\rm full}_n = (w_n \otimes \mathbf{I}_d)\left(T_{n-1}'\right)\nu_{1:n-1} +\nu_n.
\end{equation}
From (\ref{eq:deftheta}) in Theorem~\ref{prop:ideal}, we have another expression for $\theta_n^{\rm full}$. Comparing (\ref{eq:ap4}) and (\ref{eq:deftheta}) yields
\begin{align}\label{eq:ap5}
(t_n \otimes \mathbf{I}_d)\nu_{1:n-1} = (w_n \otimes \mathbf{I}_d)\left(T_{n-1}'\right)\nu_{1:n-1}\nonumber\\
=\left((w_nT'_{n-1})\otimes \mathbf{I}_d\right)\nu_{1:n-1}.
\end{align}
Note that in going from the first line to the second line in (\ref{eq:ap5}), the distributive property of tensor products is used. From (\ref{eq:ap5}) it can be inferred that $t_n = w_nT'_{n-1}$. As presented in Appendix~\ref{subsec:graphtheory}, $T_n$ is upper triangular matrix with ones in the diagonal. Therefore $T_n$ is invertible and $w_n = \left(t_nT_{n-1}'\right)^{-1}$. To complete the proof we need to start from  $w_n = \left(t_nT_{n-1}'\right)^{-1}$ and obtain $\hat \theta_n = \theta_n^{\rm full}$. This part of proof is straightforward and thus omitted from the paper. Note that the topological Constraint~1 says that if $b_n(j) =0$ then the  $j-$th entry of $\nu_{1:n-1}$ is not available to the node $n$ and thus the corresponding element of the weight vector $w_n(j)$ should be equal to zero as well. Also note that $\nu_n$ is computed by the network administrant and the data incest free public belief, $\pi_{-n}$, is available to the network administrator.
\end{proof}
\bibliographystyle{IEEEtran}
\bibliography{reff}
\end{document}